\def\version{arxiv}
\def\draftmode{false}
\newcommand{\ifarxiv}[2]{\ifthenelse{\equal{\version}{arxiv}}{#1}{#2}}
\newcommand{\ifconf}[2]{\ifthenelse{\equal{\version}{conference}}{#1}{#2}}
\newcommand{\ifdraft}[2]{\ifthenelse{\equal{\draftmode}{true}}{#1}{#2}}
	\let\mytitle\@title%
\newtheoremstyle{proofstyle}%
  {\item[\theorem@headerfont\hskip\labelsep ##1\theorem@separator]}%
  {\item[\theorem@headerfont\hskip\labelsep ##1 of ##3\theorem@separator]}
\newtheorem{theorem}{Theorem}[section]
\theoremstyle{plain}
\newtheorem{lemma}[theorem]{Lemma}
\theoremstyle{plain}
\theoremstyle{proofstyle}
\newtheorem{proof}{Proof}
\newenvironment{thmenumerate}[1]{%
	\begin{enumerate}[
		label={\makebox[\widthof{(a)}][c]{\textup{(\alph*)}}},
		ref={\ref{#1}\kern.1em--\kern.1em(\alph*)},
		itemsep=0pt,
	]%
}{%
	\end{enumerate}%
}
\newdimen\makeboxdimen
\newcommand\makeboxlike[3][l]{%
\setbox0=\hbox{#2}%
\global\makeboxdimen=\wd0%
\setbox1=\hbox{\makebox[\makeboxdimen][#1]{%
\makebox[0pt][#1]{#3}%
}}%
\ht1=\ht0%
\dp1=\dp0%
\box1%
}
\newcommand\plaincenter[1]{%
	\ignorespaces%
		\mbox{}\hfill#1\hfill\mbox{}
	\unskip%
}
\newcommand*\ie{\mbox{i.\hspace{.2ex}e.}}
\newcommand*\eg{\mbox{e.\hspace{.2ex}g.}}
\newcommand*\wrt{\mbox{w.\hspace{.2ex}r.\hspace{.2ex}t.}\xspace}
\newcommand*\withoutlossofgenerality{\mbox{w.\hspace{.23ex}l.\hspace{.2ex}o.\hspace{.17ex}g.}\xspace}
\newcommand{\ESymbol}{\mathbb{E}}
\newcommand\R{\mathbb R}
\newcommand\N{\mathbb N}
\newcommand\Z{\mathbb Z}
\newcommand\Oh{O}
\def\.{\mskip1mu}
\newcommand{\ProbSymbol}{\ensuremath{\mathbb{P}}}
\providecommand{\given}{}
\DeclarePairedDelimiterXPP\Prob[1]{\ProbSymbol}[]{}{%
	\renewcommand\given{\nonscript\:\delimsize\vert\nonscript\:\mathopen{}}%
	#1%
}
\DeclarePairedDelimiterXPP\E[1]{\ESymbol}[]{}{%
	\renewcommand\given{\nonscript\:\delimsize\vert\nonscript\:\mathopen{}}%
	#1%
}
\DeclarePairedDelimiterXPP\Eover[2]{\ESymbol_{#1}}[]{}{%
	\renewcommand\given{\nonscript\:\delimsize\vert\nonscript\:\mathopen{}}%
	#2%
}
\providecommand{\Prob}{} 
\providecommand{\E}{} 
\providecommand{\Eover}{} 
\newcommand\ui[2]{#1^{\smash{(}#2\smash{)}}}
\newcommand\eqdist{	
	\mathchoice{
		\mathrel{\overset{\raisebox{0ex}{$\scriptstyle \mathcal D$}}=}%
	}{
		\mathrel{\like{=}{%
			\overset{\raisebox{-1ex}{$\scriptscriptstyle \mathcal D$}}=%
		}}%
	}{
		\mathrel{\overset{\mathcal D}=}%
	}{
		\mathrel{\overset{\mathcal D}=}%
	}%
}
\newcommand\ppe{\phantom{=}}
\newcommand\like[3][c]{%
	\mathchoice{
		\makeboxlike[#1]{%
			\ensuremath{\displaystyle\relax#2}%
		}{%
			\ensuremath{\displaystyle\relax#3}%
		}%
	}{
		\makeboxlike[#1]{%
			\ensuremath{\textstyle\relax#2}%
		}{%
			\ensuremath{\textstyle\relax#3}%
		}%
	}{
		\makeboxlike[#1]{%
			\ensuremath{\scriptstyle\relax#2}%
		}{%
			\ensuremath{\scriptstyle\relax#3}%
		}%
	}{
		\makeboxlike[#1]{%
			\ensuremath{\scriptscriptstyle\relax#2}%
		}{%
			\ensuremath{\scriptscriptstyle\relax#3}%
		}%
	}
}
\newcommand\uniform{\mathcal U}
\newcommand\binomial{\mathrm{Bin}}
\newcommand\betadist{\mathrm{Beta}}
\newcommand\betaBinomial{\mathrm{BetaBin}}
\DeclarePairedDelimiterXPP\distFromWeights[1]{\mathcal D}(){}{#1}
\providecommand\distFromWeights{} 
\newcommand\BetaFun{\mathrm B}
\renewcommand\given{\mathbin{\mid}}
\newcommand\harm[1]{\ensuremath{H_{#1}}}
\newcommand\ce{\colonequals}
\newcommand{\surroundedmath}[3]{
	\mathchoice{
		#1{#2{#3}#2}%
	}{
		#1{#3}%
	}{
		#1{#3}%
	}{
		#1{#3}%
	}%
}
\newcommand\rel[1]{\surroundedmath{\mathrel}{\:}{#1}}
\newcommand\wrel[1]{\surroundedmath{\mathrel}{\;}{#1}}
\newcommand\wwrel[1]{\surroundedmath{\mathrel}{\;\;}{#1}}
\newcommand\bin[1]{\surroundedmath{\mathbin}{\:}{#1}}
\newcommand\wbin[1]{\surroundedmath{\mathbin}{\;}{#1}}
\newcommand\wwbin[1]{\surroundedmath{\mathbin}{\;\;}{#1}}
\newcommand{\eqwithref}[2][c]{%
	\relwithref[#1]{#2}{=}%
}
\newcommand{\relwithref}[3][c]{%
	\mathrel{\underset{\mathclap{\makebox[\widthof{$=$}][#1]{\scriptsize\wref{#2}}}}{#3}}%
}
\let\oldalign\align
\let\endoldalign\endalign
\newcommand*\numberthis[1][]{\stepcounter{equation}\tag{\theequation}}
\renewenvironment{align}{%
	\begingroup%
	\let\oldhalign\halign
	\def\halign{%
		\let\oldbreak\\%
		\def\nonnumberbreak{\nonumber\oldbreak*}%
		\def\\{%
			\@ifstar{\nonnumberbreak}{\oldbreak}%
		}%
		\oldhalign%
	}
	\oldalign%
}{%
	\endoldalign%
	\endgroup%
}
\def\mydots{\xleaders\hbox to.5em{\hfill.\hfill}\hfill}
\newlength\tmpLenNotations
\newenvironment{notations}[1][10em]{%
	\small
	\newcommand\notationentry[1]{%
		\settowidth\tmpLenNotations{##1}%
		\ifthenelse{\lengthtest{\tmpLenNotations > \labelwidth}}{%
			\parbox[b]{\labelwidth}{%
				\makebox[0pt][l]{##1}\\%
			}%
		}{%
			\mbox{##1}%
		}%
		\mydots\relax%
	}%
	\begin{list}{}{%
		\setlength\labelsep{0em}%
		\setlength\labelwidth{#1}%
		\setlength\leftmargin{\labelwidth+\labelsep+1em}%
	}
	\ifdraft{%
		\newcommand\notation[1]{%
			\item[{##1}] \marginpar{%
				\adjustbox{scale={.55}{1},outer}{%
					\color{brown!80}%
					\scriptsize$\triangleright$\;\tiny\texttt{\detokenize{##1}}%
				}%
			}%
		}%
	}{%
		\newcommand\notation[1]{\item[{##1}]}%
	}%
	\raggedright
}{%
	\end{list}
}
	\let\oldparagraph\paragraph
	\renewcommand\paragraph[1]{%
		\oldparagraph{#1.}
	}
\let\oldthebibliography\thebibliography
\renewcommand\thebibliography[1]{%
	\oldthebibliography{#1}%
	\pdfbookmark[1]{References}{}%
}
\def\myacknowledgements{}
	\newcommand\acknowledgements[1]{\def\myacknowledgements{\paragraph{Acknowledgements}#1}}
\let\epsilon\varepsilon
\newcommand\myfulltitle{%
	Average Cost of QuickXsort with Pivot~Sampling
}
\title{\myfulltitle}
	\author{%
		Sebastian Wild%
		\thanks{%
			David R.\ Cheriton School of Computer Science, 
			University of Waterloo,
			Email: \texttt{wild\,@\,uwaterloo.ca}\protect\\
			This work was supported by the 
			Natural Sciences and Engineering Research Council of Canada 
			and the Canada Research Chairs Programme.
		}
	}
\begin{document}

\maketitle

\begin{abstract}
\noindent
\textbf{Abstract:}
QuickXsort is a strategy to combine Quicksort with another sorting method~X, so
that the result has essentially the same comparison cost as X in isolation, but
sorts in place even when X requires a linear-size buffer.
We solve the recurrence for QuickXsort precisely up to the linear term
including the optimization to choose pivots from a sample of $k$ elements.
This allows to immediately obtain overall average costs using only the average costs of 
sorting method X (as~if run in isolation).
We thereby extend and greatly simplify the analysis of
QuickHeapsort and QuickMergesort with practically efficient pivot selection, 
and give the first \emph{tight} upper bounds including the linear term for such methods.
\end{abstract}

\section{Introduction}

In QuickXsort~\cite{EdelkampWeiss2014}, we use the recursive scheme of ordinary Quicksort, but
instead of doing two recursive calls after partitioning, we first sort one of the segments 
by some other sorting method X.
Only the second segment is recursively sorted by QuickXsort.
The key insight is that X can use the second segment as a temporary buffer for elements.
By that, QuickXsort is sorting in-place (using $O(1)$ words of extra space)
even when X itself is not.

Not every method makes a suitable `X'; it must use the buffer in a swap-like fashion:
After X has sorted its segment, the 
elements originally stored in our buffer must still be intact,
\ie, they must still be stored in the buffer, albeit in a different order.
Two possible examples 
that use extra space in such a way 
are
Mergesort (see \wref{sec:quickmergesort} for details) and a comparison-efficient 
Heapsort variant~\cite{CantoneCincotti2002} with an output buffer.
With QuickXsort we can make those methods sort in-place
while retaining their comparison efficiency.
(We lose stability, though.)

While other comparison-efficient in-place sorting methods are 
known (\eg~\cite{Reinhardt1992,Katajainen1998,GeffertGajdos2011}),
the ones based on QuickXsort and elementary methods X are particularly easy to implement%
\footnote{%
	See for example the code for QuickMergesort that was presented for discussion 
	on code review stack exchange, \url{https://codereview.stackexchange.com/q/149443},
	and the succinct C++ code in~\cite{EdelkampWeiss2018}.
}
since one can adapt existing implementations for X.
In such an implementation, the tried and tested optimization to choose the pivot 
as the median of a small sample suggests itself to improve QuickXsort.
In previous works~\cite{CantoneCincotti2002,EdelkampWeiss2014,DiekertWeiss2016,EdelkampWeiss2018}, 
the influence of QuickXsort on the performance of X was either studied
by ad-hoc techniques that do not easily apply with general pivot sampling or
it was studied for the case of very good pivots:
exact medians or medians of a sample of $\sqrt n$ elements.
Both are typically detrimental to the average performance since they add significant
overhead, whereas most of the benefit of sampling is realized already for samples of very 
small constant sizes like $3$, $5$ or $9$.
Indeed, in a very recent manuscript~\cite{EdelkampWeiss2018}, Edelkamp and Weiß describe
an optimized median-of-3 QuickMergesort implementation in C++ that outperformed the
library Quicksort in \texttt{std::sort}.

The contribution of this paper is
\textbf{\boldmath a general transfer theorem (\wref{thm:cn}) that expresses the costs of QuickXsort
with median-of-$k$ sampling (for any odd constant $k$) directly in terms of the costs
of X,} (\ie, the costs that X needs to sort $n$ elements in isolation).
We thereby obtain the first analyses of QuickMergesort and QuickHeapsort 
with best possible constant-coefficient bounds on the linear term under realistic sampling schemes.

Since Mergesort only needs a buffer for one of the two runs, QuickMergesort
should not simply give Mergesort the smaller of the two segments to sort, 
but rather the \emph{largest one
for which the other segments still offers sufficient buffer space.}
(This will be the larger segment of the two if the smaller one contains at least a third of the elements;
see \wref{sec:quickmergesort} for details.)
Our transfer theorem covers this refined version of QuickMergesort, as well,
which had not been analyzed before.%
\footnote{%
	Edelkamp and Weiß do consider this version of QuickMergesort~\cite{EdelkampWeiss2014}, 
	but only analyze it for median-of-$\sqrt n$ pivots.
	In this case, the behavior coincides with the simpler strategy to 
	always sort the smaller segment by Mergesort
	since the segments are of almost equal size with high probability.%
}

The rest of the paper is structured as follows:
In \wref{sec:previous-work}, we summarize previous work on QuickXsort with 
a focus on contributions to its analysis.
\wref{sec:prelims} collects mathematical facts and notations used later.
In \wref{sec:quickxsort} we define QuickXsort and formulate a recurrence for its cost.
Its solution is stated in \wref{sec:main-result}.
\wref{sec:quickmergesort} presents the QuickMergesort as our stereotypical instantiation
of QuickXsort.
The proof of the transfer spreads over
\wref[Sections]{sec:solution-leading-term} and~\ref{sec:solution-linear-term}.
In \wref{sec:discussion}, we apply our result to QuickHeapsort and QuickMergesort
and discuss some algorithmic implications.

\section{Previous Work}
\label{sec:previous-work}
The idea to combine Quicksort and a secondary sorting method was suggested by
Contone and Cincotti~\cite{CantoneCincotti2000,CantoneCincotti2002}.
They study Heapsort with an output buffer (external Heapsort),%
\footnote{%
	Not having to store the heap in a
	consecutive prefix of the array allows to save
	comparisons over classic in-place Heapsort:
	After a delete-max operation, we can fill the gap at the root of the heap
	by promoting the largest child and recursively moving the gap down the heap.
	(We then fill the gap with a $-\infty$ sentinel value).
	That way, each delete-max 
	needs exactly $\lfloor \lg n\rfloor $ comparisons.%
}
and combine it with Quicksort to QuickHeapsort.
They analyze the average costs for external Heapsort in isolation and use 
a differencing trick for dealing with the QuickXsort recurrence;
however, this technique is hard to generalize to median-of-$k$ pivots.

Diekert and Weiß~\cite{DiekertWeiss2016} suggest optimizations for QuickHeapsort
(some of which need extra space again), and
they give better upper bounds for QuickHeapsort with random pivots and median-of-3.
Their results are still not tight since they upper bound the total cost of all Heapsort
calls together (using ad hoc arguments on the form of the costs for one Heapsort round),
without taking the actual subproblem sizes into account that Heapsort is used on.
In particular, their bound on the overall contribution of the Heapsort calls does not depend
on the sampling strategy.

Edelkamp and Weiß~\cite{EdelkampWeiss2014} explicitly describe QuickXsort as a general
design pattern and, among others, consider using Mergesort as `X'.
They use the median of $\sqrt n$ elements
in each round throughout to guarantee good splits with high probability.
They show by induction that when X uses at most
$n \lg n + cn + o(n)$ comparisons on average for some constant $c$,
the number of comparisons in QuickXsort is also bounded by $n \lg n + cn + o(n)$.
By combining QuickMergesort with Ford and Johnson's MergeInsertion~\cite{FordJohnson1959}
for subproblems of logarithmic size, Edelkamp and Weiß obtained an in-place sorting method
that uses on the average a close to minimal number of comparisons of $n\lg n - 1.3999n +o(n)$.

In a recent follow-up manuscript~\cite{EdelkampWeiss2018}, Edelkamp and Weiß
investigated the practical performance of QuickXsort and found that
a tuned median-of-3 QuickMergesort variant indeed outperformed the C++
library Quicksort. 
They also derive an upper bound for the average costs of their algorithm
using an inductive proof; their bound is not tight.

\section{Preliminaries}
\label{sec:prelims}

A comprehensive list of used notation is given in \wref{app:notation};
we mention the most important here.
We use Iverson's bracket $[\mathit{stmt}]$ to mean $1$ if $\mathit{stmt}$ is true and $0$ otherwise.
$\Prob{E}$ denotes the probability of event $E$, $\E{X}$ the expectation
of random variable $X$. We write $X\eqdist Y$ to denote equality in distribution.

We heavily use the beta distribution: For $\alpha,\beta \in \R_{>0}$,
$X\eqdist\betadist(\alpha,\beta)$ if $X$ admits the density $f_X(z) = z^{\alpha-1}(1-z)^{\beta-1} / \BetaFun(\alpha,\beta)$
where $\BetaFun(\alpha,\beta) = \int_0^1 z^{\alpha-1}(1-z)^{\beta-1} \, dz$ is the beta function.
Moreover, we use the beta-binomial distribution, which is a conditional binomial distribution 
with the success probability being a beta-distributed random variable.
If $X\eqdist \betaBinomial(n,\alpha,\beta)$ then
$\Prob{X=i} = \binom ni \BetaFun(\alpha+i,\beta+(n-i)) / \BetaFun(\alpha,\beta)$.
For a collection of its properties see~\cite{Wild2016}, Section~2.4.7;
one property that we use here is a local limit law showing that the normalized beta-binomial distribution
converges to the beta distribution. It is reproduced as \wref{lem:beta-binomial-convergence-to-beta}
in the appendix.

For solving recurrences, we build upon Roura's master theorems~\cite{Roura2001}.
The relevant continuous master theorem is restated in the appendix (\wref{thm:CMT}).

\section{QuickXsort}
\label{sec:quickxsort}

Let X be a sorting method that requires buffer space for storing
at most $\lfloor \alpha n\rfloor$ elements (for $\alpha \in [0,1]$) 
to sort $n$ elements. The buffer may only be accessed by swaps 
so that once X has finished its work, the buffer contains the same elements as before, 
but in arbitrary order.
Indeed, we will assume that X does not compare any buffer contents;
then QuickXsort preserves randomness: 
if the original input is a random permutation, so will be the segments after partitioning
and so will be the buffer after X has terminated.%
\footnote{%
	We assume in this paper throughout that the input contains pairwise distinct elements.
}

We can then combine%
\footnote{%
	Depending on details of X, further precautions might have to be taken,
	\eg, in QuickHeapsort~\cite{CantoneCincotti2002}.
	We assume here that those have already been taken care of and solely 
	focus on the analysis of QuickXsort.%
} 
X with Quicksort as follows:
We first randomly choose a pivot and partition the input around that pivot.
This results in two contiguous segments containing the $J_1$ elements that are smaller than the pivot
and the $J_2$ elements that are larger than the pivot, respectively.
We exclude the space for the pivot, so $J_1+J_2 = n-1$;
note that since the rank of the pivot is random, so are the segment sizes $J_1$ and $J_2$.
We then sort one segment by X using the other segment as a buffer,
and afterwards sort the buffer segment recursively by QuickXsort.

To guarantee a sufficiently large buffer for X when it sorts $J_r$ ($r=1$ or $2$), 
we must make sure that $J_{3-r} \ge \alpha J_r$.
In case both segments could be sorted by X, we use the larger one.
The motivation behind this is that we expect an advantage from reducing the
subproblem size for the recursive call as much as possible.

We consider the practically relevant version of QuickXsort,
where we use as pivot the median of a sample of $k=2t+1$ elements,
where $t\in\N_0$ is constant \wrt $n$.
\ifconf{}{We think of $t$ as a design parameter of the algorithm that we have to choose.}
Setting $t=0$ corresponds to selecting pivots uniformly at random.

\subsection{Recurrence for Expected Costs}

Let $c(n)$ be the expected number of comparisons in QuickXsort on arrays of size $n$
and $x(n)$ be (an upper bound for) the expected number of comparisons in X.
We will assume that $x(n)$ fulfills
\begin{align*}
		x(n)
	&\wwrel=
		a n \lg n +b n \wbin\pm \Oh(n^{1-\epsilon}) 
		,\qquad(n\to\infty),
\end{align*}
for constants $a$, $b$ and $\epsilon\in(0,1]$.

For $\alpha<1$, we obtain two cases:
When the split induced by the pivot is ``uneven''~-- 
namely when $\min\{J_1,J_2\} < \alpha\max\{J_1,J_2\}$, \ie, 
$\max\{J_1,J_2\} > \frac{n-1}{1+\alpha}$~-- 
the smaller
segment is not large enough to be used as buffer.
Then we can only assign the large segment as a buffer and run X on the \emph{smaller} segment.
If however the split is about ``even'', \ie, both segments are $\le \tfrac{n-1}{1+\alpha}$
we can sort the \emph{larger} of the two segments by X.
These cases also show up in the recurrence of costs:
\begin{align*}
		c(n) 
	&\wwrel=
		b(n) \ge 0,
	\qquad(n\le k)
\\
		c(n)
	&\wwrel=
			\begin{aligned}[t]
			(n-k) + b(k)  
				&\bin+ \E*{[J_1,J_2 \le \tfrac1{1+\alpha} (n-1) ][J_1 > J_2](x(J_1) + c(J_2))}\\
				&\bin+ \E*{[J_1,J_2 \le \tfrac1{1+\alpha} (n-1) ][J_1 \le J_2](x(J_2) + c(J_1))}\\
				&\bin+ \E*{[J_2 > \tfrac1{1+\alpha}(n-1)](x(J_1) + c(J_2))}\\
				&\bin+ \E*{[J_1 > \tfrac1{1+\alpha}(n-1)](x(J_2) + c(J_1))}
			&\qquad (n \ge 2)
			\end{aligned}
\\	&\wwrel=
		\sum_{r=1}^2\E{A_r(J_r) c(J_r)}  + t(n)
	\numberthis\label{eq:recurrence-cn-E}
\ifconf{\qquad\text{where}\\}{\shortintertext{where}}
		A_1(J)
	&\wwrel=
		[J,J' \le \tfrac1{1+\alpha} (n-1) ]\cdot [J \le J'] \bin+ [J > \tfrac1{1+\alpha}(n-1)]
	\quad\text{with }J' = (n-1)-J
\\
		A_2(J)
	&\wwrel=
		[J,J' \le \tfrac1{1+\alpha} (n-1) ]\cdot [J < J'] \bin+ [J > \tfrac1{1+\alpha}(n-1)]
\\
		t(n)
	&\wwrel=
		(n-1) 
			\bin+ \E*{A_2(J_2) x(J_1)}
			\bin+ \E*{A_1(J_1) x(J_2)}
\label{eq:definition-toll}
\end{align*}

The expectation here is taken over the choice for the random pivot, \ie, over the
segment sizes $J_1$ resp.\ $J_2$.
Note that we use both $J_1$ and $J_2$ to express the conditions in a convenient form,
but actually either one is fully determined by the other via $J_1 + J_2 = n-1$.
Note how $A_1$ and $A_2$ change roles in recursive calls and toll functions, 
since we always sort one segment recursively and the other segment by X.

\ifconf{}{
	The base cases $b(n)$ are the costs to sort inputs that are too small to sample $k$ elements.
	A practical choice is be to switch to Insertionsort for these, which is also used for sorting the samples.
	Unlike for Quicksort itself, $b(n)$ only influences the \emph{logarithmic term} of costs
	(for constant $k$).
	For our asymptotic transfer theorem, we only assume $b(n)\ge 0$, the actual values are immaterial.
}

\paragraph{Distribution of Subproblem Sizes}
If pivots are chosen as the median of a random sample of size $k = 2t+1$,
the subproblem sizes have the same distribution, $J_1\eqdist J_2$.
Without pivot sampling, we have $J_1 \eqdist \uniform[0..n-1]$, a discrete uniform distribution.
If we choose pivots as medians of a sample of $k=2t+1$ elements, the value for $J_1$ 
consists of two summands: $J_1 = t + I_1$. The first summand, $t$, 
accounts for the part of the sample that is smaller than
the pivot. Those $t$ elements do not take part in the partitioning round (but they have to be
included in the subproblem). $I_1$ is the number of elements that turned out to be
smaller than the pivot during partitioning.

This latter number $I_1$ is random, and its distribution is $I_1\eqdist \betaBinomial(n-k,t+1,t+1)$,
a so-called \textsl{beta-binomial distribution}.
The connection to the beta distribution is best seen by assuming $n$ independent and uniformly in $(0,1)$
distributed reals as input. They are almost surely pairwise distinct and their relative
ranking is equivalent to a random permutation of~$[n]$, so this assumption is \withoutlossofgenerality
for our analysis.
Then, the \emph{value} $P$ of the pivot in the first partitioning step has a 
$\betadist(t+1,t+1)$ distribution \emph{by definition}.
\emph{Conditional} on that \emph{value} $P=p$, $I_1 \eqdist \binomial(n-k,p)$ has a binomial distribution;
the resulting mixture is the so-called beta-binomial distribution.

For $t=0$, \ie, no sampling, we have $t+\betaBinomial(n-k,t+1,t+1) = \betaBinomial(n-1,1,1)$,
so we recover the uniform case $\uniform[0..n-1]$.

\section{The Transfer Theorem}
\label{sec:main-result}

We now state the main result of the paper: an asymptotic approximation for $c(n)$.

\begin{theorem}[Total Cost of QuickXsort]
\label{thm:cn}
	The expected number of comparisons needed to sort a 
	random permutation with QuickXsort using median-of-$k$ pivots, $k=2t+1$, 
	and a sorting method X that needs a buffer of $\lfloor \alpha n\rfloor$ elements 
	for some constant $\alpha\in [0,1]$ to sort $n$ elements
	and requires on average $x(n)=a n\lg n +bn \pm \Oh(n^{1-\epsilon})$ comparisons to do so 
	as $n\to\infty$ for some $\epsilon\in(0,1]$ is
	\begin{align*}
			c(n)
		&\wwrel=
			a n \lg n + \biggl(
				\frac 1H - a\cdot\frac {\harm{k+1}-\harm{t+1}}{H \ln 2} +b \biggr) \cdot n
			\wbin\pm\Oh(n^{1-\epsilon} + \log n),
	\ifconf{\\\text{where\;\;}}{\shortintertext{where}}
			H
		&\wwrel=  I_{0,\frac\alpha{1+\alpha}}(t+2,t+1)
			\bin+ I_{\frac12,\frac1{1+\alpha}}(t+2,t+1)
	\end{align*}
	is the expected relative subproblem size that is sorted by X.
\end{theorem}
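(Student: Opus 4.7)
The plan is to apply Roura's continuous master theorem (\wref{thm:CMT}) to the recurrence~\wref{eq:recurrence-cn-E} in two stages: first to identify the coefficient of $n\lg n$ in $c(n)$, then, after subtracting this leading term, to read off the linear coefficient from a second application. First I would reinterpret the recurrence as a single random recursive call of size $S_n = A_1(J_1)J_1 + A_2(J_2)J_2$ driven by the sample median $P\eqdist\betadist(t+1,t+1)$. Writing $U_n = (n-1) - S_n$ for the size of the segment sorted by~X, the normalized pair $(S_n/n,\,U_n/n)$ converges in distribution to $(Z,U)$ with $U+Z=1$; in the ``balanced'' case $P \in [\alpha/(1+\alpha),\,1/(1+\alpha)]$ one has $U = \max(P,1-P)$ and $Z = \min(P,1-P)$, while in the ``unbalanced'' case these roles swap.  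Using $\BetaFun(t+2,t+1)/\BetaFun(t+1,t+1)=\tfrac12$, the four piecewise contributions to $\E{U}$ collapse into the two incomplete-beta terms defining~$H$.

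Using the assumption $x(n)=an\lg n+bn\pm\Oh(n^{1-\epsilon})$ together with the local limit law \wref{lem:beta-binomial-convergence-to-beta} for $\betaBinomial(n-k,t+1,t+1)$ and the identity $\E{U_n\lg U_n}=\E{U_n}\lg n+\E{U_n\lg(U_n/n)}$, I would establish
\begin{align*}
    t(n) &\wwrel= aH\cdot n\lg n \bin+ \bigl(1+a\E{U\lg U}+bH\bigr)\cdot n \bin\pm \Oh(n^{1-\epsilon}).
\end{align*}
A first application of the CMT then gives $c(n)=Kn\lg n+\Oh(n)$, with $K$ pinned down by the balance equation $KH = aH$, so $K=a$.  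Substituting $c(n)=an\lg n+\tilde c(n)$ into the recurrence and expanding $\E{a S_n\lg S_n}$ with the same local limit law produces a recurrence for $\tilde c$ whose toll is
\begin{align*}
    \tilde t(n) &\wwrel= \bigl(1+a\E{U\lg U + Z\lg Z}+bH\bigr)\cdot n \bin\pm \Oh(n^{1-\epsilon}+\log n).
\end{align*}
A second application of the CMT then yields $\tilde c(n) = \kappa\cdot n \pm \Oh(n^{1-\epsilon}+\log n)$ with $\kappa = \bigl(1 + a\E{U\lg U + Z\lg Z} + bH\bigr)/H$, which already produces the $1/H$ factor multiplying the $1$ and the $+b$ contribution from the toll.

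The decisive calculation is the remaining entropy-like expectation: in both cases of the piecewise definition $\{U,Z\} = \{P,1-P\}$ as an unordered pair, so $U\ln U+Z\ln Z = P\ln P+(1-P)\ln(1-P)$ \emph{unconditionally}.  Averaging over $P\eqdist\betadist(t+1,t+1)$ and using symmetry of its density around $\tfrac12$ reduces this to $2\E{P\ln P}$; writing it as $\partial_s\BetaFun(s,t+1)|_{s=t+2}$ via $\BetaFun(s,t+1)=\Gamma(s)\Gamma(t+1)/\Gamma(s+t+1)$, together with $\psi(t+2)-\psi(2t+3)=\harm{t+1}-\harm{k+1}$ (since $k+1 = 2t+2$), yields $\E{U\ln U+Z\ln Z} = \harm{t+1}-\harm{k+1}$.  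Dividing by $\ln 2$ turns this into the $-a(\harm{k+1}-\harm{t+1})/(H\ln 2)$ term in the theorem.

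The main obstacle I anticipate is the $\Oh(n^{1-\epsilon})$ error control.  The toll $t(n)$ is a sum of discrete expectations of $x(J)$ against the beta-binomial mass function, and transporting the analysis to the continuous beta density requires quantitative versions of both the assumed asymptotic of $x$ and of \wref{lem:beta-binomial-convergence-to-beta}.  Moreover the second CMT application is in the boundary regime where the toll is $\Theta(n)$, so the shape function has to be shown to have total mass $1$ with first moment exactly $1-H$, and the $n$-coefficient of $\tilde t$ must be extracted with error strictly better than $\Theta(n)$; this is what occupies the two dedicated sections of the paper.
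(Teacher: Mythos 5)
Your proposal is correct and follows essentially the same route as the paper: a two-stage application of Roura's CMT (leading term, then the recurrence for $c(n)-an\lg n$), with the toll evaluated via the local limit law for the beta-binomial, the case-independence of the entropy term coming from the fact that the recursive and X-segments always form the unordered pair $\{J_1,J_2\}$ (the paper's $A_1+A_2=1$), and the logarithmic beta integral producing $\harm{t+1}-\harm{k+1}$. The technical points you flag at the end — the quantitative $\Oh(n^{1-\epsilon})$ error propagation (handled by a further CMT application to the error recurrence) and the verification of the shape-function condition — are exactly the parts the paper devotes its remaining sections and appendices to, and both CMT applications land in Case 1 with the same $H>0$.
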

Here $I_{x,y}(\alpha,\beta)$ is the regularized incomplete beta function
\begin{align*}
		I_{x,y}(\alpha,\beta)
	&\wwrel=
		\int_x^y \frac{z^{\alpha-1}(1-z)^{\beta-1}}{\BetaFun(\alpha,\beta)} \, dz
		,\qquad (\alpha,\beta\in\R_+, 0\le x\le y\le 1).
\end{align*}

We prove \wref{thm:cn} in \wref[Sections]{sec:solution-leading-term}
and~\ref{sec:solution-linear-term}.
To simplify the presentation, we will restrict ourselves to a stereotypical
algorithm for X and its value $\alpha=\frac12$;
the given arguments, however, immediately extend to the general statement above.

\section{QuickMergesort}
\label{sec:quickmergesort}

A natural candidate for X is Mergesort: It is comparison-optimal up to the linear term
(and quite close to optimal in the linear term),
and needs a $\Theta(n)$-element-size buffer for practical implementations of merging.%
\footnote{%
	Merging can be done in place using more advanced tricks (see, \eg, \cite{MannilaUkkonen1984}), 
	but those tend not to be competitive
	in terms of running time with other sorting methods.
	By changing the global structure, a pure in-place Mergesort variant~\cite{KatajainenPasanenJukka1996} 
	can be achieved using part of the input as a buffer (as in QuickMergesort) 
	at the expense of occasionally having to merge runs of very different lengths.
}

To be usable in QuickXsort, we use a swap-based merge procedure as given in \wref{alg:merge}.
\begin{algorithm}[tbh]
	\begin{codebox}
	\Procname{$\proc{Merge}(A[\ell..r],m, B[b..e])$}
	\zi \Comment Merges runs $A[\ell,m-1]$ and $A[m..r]$ in-place into $A[l..r]$ using scratch space $B[b..e]$
	\li $n_1\gets r-\ell+1$; \; $n_2\gets r-\ell+1$
	\zi \Comment Assumes $A[\ell,m-1]$ and $A[m..r]$ are sorted, $n_1 \le n_2$ and $n_1 \le e-b+1$.
	\li \kw{for} $i = 0,\ldots,n_1-1$ \kw{do}
			$\proc{Swap}(A[\ell+i], B[b+i])$ \kw{end for}
	\li $i_1 \gets b$; \; $i_2 \gets m$; $o \gets \ell$
	\li \While $i_1 < b + n_1$ and $i_2 \le r$
	\Do
		\li \kw{if} $B[i_1] \le A[i_2]$ \kw{then}
		\>\>\>\>\>
			$\proc{Swap}(A[o],B[i_2])$; \; $o\gets o+1$; \; $i_1\gets i_1+1$
		\li \kw{else}
		\>\>\>\>\>
			$\proc{Swap}(A[o],A[i_1])$; \; $o\gets o+1$; \; $i_2\gets i_2+1$
		\; \kw{end if}
	\End
	\li \kw{while} $i_1 < b + n_1$ \kw{do}
			$\proc{Swap}(A[o],B[i_2])$; \; $o\gets o+1$; \; $i_1\gets i_1+1$ 
		\; \kw{end while}
	\end{codebox}
	\caption{%
		A simple merging procedure that uses the buffer only by swaps.
		We move the first run $A[\ell..r]$ into the buffer $B[b..b+n_1-1]$ 
		and then merge it with the second run $A[m..r]$ (still stored
		in the original array) into the empty slot left by the first run. 
		By the time this first half is filled, we either have consumed enough of the second run 
		to have space to grow the merged result, or the merging was trivial, \ie, 
		all elements in the first run were smaller.%
	}
	\label{alg:merge}
\end{algorithm}
Note that it suffices to move the \emph{smaller} of the two runs to a buffer;
we use a symmetric version of \wref{alg:merge} when the second run is shorter.
Using classical top-down or bottom-up Mergesort as described in 
any algorithms textbook (\eg~\cite{SedgewickWayne2011}),
we thus get along with $\alpha=\frac12$.

\subsection{Average Case of Mergesort}

The average number of comparisons for Mergesort has the same~-- optimal~-- leading
term $n \lg n$ as in the worst and best case;
and this is true for both the top-down and bottom-up variants.
The coefficient of the \emph{linear} term of the asymptotic expansion, though, is not a constant,
but a bounded periodic function with period $\lg n$, and the functions differ for best, worst,
and average case and the variants of 
Mergesort~\cite{SedgewickFlajolet2013,FlajoletGolin1994,PannyProdinger1995,Hwang1996,Hwang1998}.

In this paper, we will confine ourselves to an \emph{upper bound} for the average case 
$x(n) = a n \lg n +b n \wbin\pm \Oh(n^{1-\epsilon}) $ with \emph{constant} $b$ valid for all $n$,
so we will set $b$ to the supremum of the periodic function.
We leave the interesting challenge open to trace the precise behavior of the 
fluctuations through the recurrence,
where Mergesort is used on a logarithmic number of subproblems with random sizes.

We use the following upper bounds for \underline top-\underline down~\cite{Hwang1998} 
and \underline bottom-\underline up~\cite{PannyProdinger1995} Mergesort%
\footnote{%
	Edelkamp and Weiß~\cite{EdelkampWeiss2014} use 
	$x(n) = n \lg n - 1.26n \wbin\pm o(n)$;
	Knuth~\cite[5.2.4--13]{Knuth1998} derived this formula for $n$ a power of $2$
	(a general analysis is sketched, but no closed result for general $n$ is given).
	Flajolet and Golin~\cite{FlajoletGolin1994} and Hwang~\cite{Hwang1998} continued 
	the analysis in more detail; 
	they find that the average number of comparisons is $n \lg n -(1.25 \pm 0.01) n \pm O(1)$,
	where the linear term oscillates in the given range. 
}
\begin{align}
		x_\mathrm{td}(n)
&	\wwrel=
		n \lg n - 1.24n + 2  \qquad \text{and}
\ifconf{&}{\\}
		x_\mathrm{bu}(n)
&	\wwrel=
		n \lg n - 0.26n \wbin\pm \Oh(1) .
\end{align}

\section{Solving the Recurrence: Leading Term}
\label{sec:solution-leading-term}

We start with \weqref{eq:recurrence-cn-E}. Since $\alpha = \frac12$ for our Mergesort, 
we have $\frac\alpha{1+\alpha} = \frac13$ and $\frac1{1+\alpha} = \frac 23$.
(The following arguments are valid for general $\alpha$, including the extreme case $\alpha=1$,
but in an attempt to de-clutter the presentation, we stick to $\alpha=\frac12$ here.)
We rewrite $A_1(J_1)$ and $A_2(J_2)$ explicitly in terms of the \emph{relative subproblem size}:
\begin{align*}
		A_1(J_1)
	&\wwrel=
		\biggl[\frac{J_1}{n-1} \rel\in \Bigl[\frac13,\frac12\like{\Bigr)}{\Bigr]}\cup \Bigl(\frac23,1\Bigr]\biggr] ,
\ifconf{&}{\\}
		A_2(J_2)
	&\wwrel=
		\biggl[\frac{J_2}{n-1} \rel\in \Bigl[\frac13,\frac12\Bigr)\cup \Bigl(\frac23,1\Bigr]\biggr] .
\end{align*}
Graphically, if we view $J_1/(n-1)$ as a point in the unit interval,
the following picture shows which subproblem is sorted recursively;
(the other subproblem is sorted by Mergesort).
\vspace{-1ex}
\begin{center}\small
\begin{tikzpicture}[yscale=7,xscale=10]
	\draw (0,0) -- (1,0);
	\draw[{Bracket[]}-{Parenthesis[]},thick] (0,0) -- node[below] {$A_2=1$} (1/3,0) ;
	\draw[{Bracket[]}-{Bracket[]},thick] (1/3,0) -- node[above] {$A_1=1$} (1/2,0) ;
	\draw[{Parenthesis[]}-{Bracket[]},thick] (1/2,0) -- node[below] {$A_2=1$} (2/3,0) ;
	\draw[{Parenthesis[]}-{Bracket[]},thick] (2/3,0) -- node[above] {$A_1=1$} (1,0) ;
	\foreach \l in {0,1/3,1/2,2/3,1} {\node at (\l,-0.075) {$\l$} ;}
\end{tikzpicture}
\end{center}
Obviously, we have $A_1 + A_2 = 1$ for any choice of $J_1$, 
which corresponds to having exactly one recursive call
in QuickMergesort.
\subsection{The Shape Function}
The expectations $\E{A_r(J_r) c(J_r)}$ in \weqref{eq:recurrence-cn-E} 
are actually finite sums over the values $0,\ldots, n-1$ that $J \ce J_1$ can attain.
Recall that $J_2 = n-1 - J_1$ and
$A_1(J_1) + A_2(J_2) = 1$ for any value of $J$.
With $J = J_1 \eqdist J_2$, we find
\begin{align*}
		\sum_{r=1}^2\E{A_r(J_r) c(J_r)}
	&\wwrel=
\ifconf{}{
		{}\bin\ppe
		\E[\Bigg]{\biggl[\frac{J}{n-1} \rel\in \Bigl[\frac13,\frac12\like{\Bigr)}{\Bigr]}\cup \Bigl(\frac23,1\Bigr]\biggr] \cdot c(J)}
\\*	&\wwrel\ppe{}	
		\bin+ 
		\E[\Bigg]{\biggl[\frac{J}{n-1} \rel\in \Bigl[\frac13,\frac12\Bigr)\cup \Bigl(\frac23,1\Bigr]\biggr] \cdot c(J)}
\\	&\wwrel=
}
		\sum_{j=0}^{n-1} w_{n,j} \cdot c(j)
		,\qquad \text{where}
\\[1ex]			
		w_{n,j}
	&\wwrel={}
		\phantom{{}+{}}\Prob{J=j} \cdot 
			\Bigl[\tfrac{j}{n-1} \in [\tfrac13,\tfrac12\like(]\cup (\tfrac23,1]\Bigr]
\\*	&\wwrel\ppe{}
		+\Prob{J=j} \cdot 
			\Bigl[\tfrac{j}{n-1} \in [\tfrac13,\tfrac12)\cup (\tfrac23,1]\Bigr]
\\[.5ex]	&\wwrel= 
		\begin{dcases}
			2 \cdot \Prob{J = j} & \text{if } \tfrac{j}{n-1} \in [\tfrac13,\tfrac12)\cup (\tfrac23,1] \\
			1 \cdot \Prob{J=j} & \text{if } \tfrac{j}{n-1} = \tfrac12 \\
			0	& \text{otherwise}
		\end{dcases}
\end{align*}
We thus have a recurrence of the form required by the Roura's \textsl{continuous master theorem (CMT)}
(see \wref{thm:CMT} in \wref{app:CMT}) with the weights $w_{n,j}$ from above 
(\wref{fig:wnj} shows an example how these weights look like).

\begin{figure}[htb]
\plaincenter{\includegraphics[width=.5\linewidth]{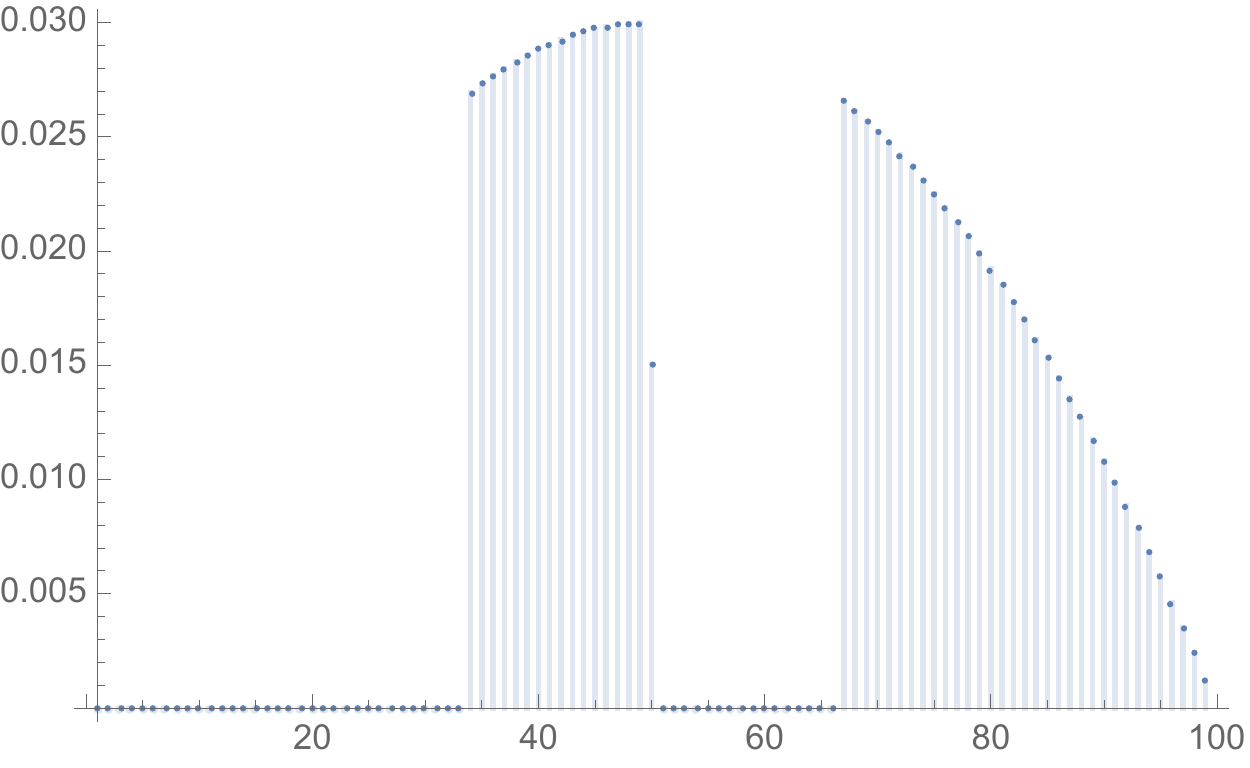}}
\caption{%
	The weights $w_{n,j}$ for $n=101$, $t=1$; note the singular point at $j=50$.
}
\label{fig:wnj}
\end{figure}

It remains to determine $\Prob{J=j}$.
Recall that we choose the pivot as the median of $k=2t+1$ elements for a fixed constant $t\in\N_0$,
and the subproblem size $J$ fulfills $J = t + I$ with $I\eqdist \betaBinomial(n-k,t+1,t+1)$.
So we have for $i\in[0,n-1-t]$ by definition
\begin{align*}
		\Prob{I=i}
	&\wwrel=
		\binom{n-k}{i} \frac{\BetaFun\bigl(i+t+1,(n-k-i))+t+1\bigr)}{\BetaFun(t+1,t+1)}
\\	&\wwrel=
		\binom{n-k}{i} \frac{ (t+1)^{\overline{i}} (t+1)^{\overline{n-k-i}} }{(k+1)^{\overline{n-k}}}
\end{align*}
(For details, see \cite[Section 2.4.7]{Wild2016}.)
Now the local limit law for beta binomials 
(\wref{lem:beta-binomial-convergence-to-beta} in \wref{app:local-limit-beta} says that
the normalized beta binomial $I/n$ converges to a beta variable ``in density'',
and the convergence is uniform.
With the beta density $f_P(z) = z^t(1-z)^t / \BetaFun(t+1,t+1)$, 
we thus find by \wref{lem:beta-binomial-convergence-to-beta} that
\begin{align*}
		\Prob{J=j}
	&\wwrel=
		\Prob{I=j-t}
	\wwrel=
		\frac1n f_P(j/n) \wwbin\pm \Oh(n^{-2}) 
		,\qquad (n\to\infty).
\end{align*}
The shift by the small constant $t$ from $(j-t)/n$ to $j/n$
only changes the function value by $\Oh(n^{-1})$ since $f_P$ is Lipschitz continuous on $[0,1]$.
(Details of that calculation are also given in \cite{Wild2016}, page~208.)

The first step towards applying the CMT is to identify a shape function $w(z)$
that approximates the relative subproblem size probabilities $w(z) \approx n w_{n,\lfloor z n\rfloor}$ 
for large $n$.
With the above observation, a natural choice is
\begin{align*}
\numberthis\label{eq:shape-function}
		w(z)
	&\wwrel=
		2\,\bigl[\tfrac13 < z < \tfrac12 \bin\vee z > \tfrac23\bigr] 
			\frac{z^{t}(1-z)^{t}}{\BetaFun(t+1,t+1)}.
\end{align*}
We show in \wref{app:shape-function-condition} that this is indeed a suitable shape function,
\ie, it fulfills \weqref{eq:CMT-shape-function-condition} from the CMT.

\subsection{Computing the Toll Function}
\label{sec:toll-function-leading-term}

The next step in applying the CMT is a leading-term approximation
of the toll function.
We consider a general function $x(n) = a n \lg n + b n \pm \Oh(n^{1-\epsilon})$ where
the error term holds for any constant $\epsilon >0$ as $n\to\infty$. 
We start with the simple observation that
\begin{align*}
		J \lg J
	&\wwrel=
		J \bigl(\lg (\tfrac Jn) + \lg n\bigr)
\ifconf{}{\\	&}\wwrel=
		n \cdot \Bigl( \tfrac Jn \lg \tfrac Jn + \tfrac Jn \lg n  \Bigr)
\ifconf{}{\\	&}\wwrel=
		\tfrac J n \, n \lg n\bin+ \tfrac Jn \lg \bigl(\tfrac Jn\bigr)\, n.
	\numberthis\label{eq:ex-JlnJ}
\\	&\wwrel=
		\tfrac J n \, n \lg n \wwbin\pm \Oh(n).
	\numberthis\label{eq:ex-JlnJ-On}
\end{align*}
For the leading term of $\E{x(J)}$, we thus only have to compute 
the expectation of $J/n$, which is essentially a relative subproblem size.
In $t(n)$, we also have to deal with the conditionals $A_1(J)$ resp.\
$A_2(J)$, though. By approximating $\frac Jn$ with a
beta distributed variable, the conditionals translate to bounds of an integral. 
Details are given in \wref{lem:E-Jn-int} (see \wref{app:approx-by-beta-integrals}). This yields
\begin{align*}
		t(n)
	&\wwrel=
		n-1 + \E*{A_2(J_2) x(J_1)} + \E*{A_1(J_1) x(J_2)}
\\	&\wwrel=
		a\, \E*{A_2(J_2) J_1 \lg J_1} + a\, \E*{A_1(J_1) J_2 \lg J_2)} \wwbin\pm \Oh(n)
\\	&\wwrel{\eqwithref[r]{lem:E-xy-Jn}}
		2a\cdot \frac{t+1}{2t+2} \cdot\Bigl(
			I_{0,\frac13}(t+2,t+1) + I_{\frac12,\frac23}(t+2,t+1)
		\Bigr) \cdot n\lg n \wwbin\pm \Oh(n)
\\	&\wwrel=
		\underbrace{
			a\Bigl(
				I_{0,\frac13}(t+2,t+1) + I_{\frac12,\frac23}(t+2,t+1)
			\Bigr) 
		} _ {\bar a} {}
		\cdot n\lg n \wwbin\pm \Oh(n)
		,\qquad(n\to\infty).
\numberthis\label{eq:tn-leading-term}
\end{align*}
Here we use the \textsl{incomplete regularized beta function} 
\begin{align*}
		I_{x,y}(\alpha,\beta)
	&\wwrel=
		\int_x^y \frac{z^{\alpha-1}(1-z)^{\beta-1}}{\BetaFun(\alpha,\beta)} \, dz
		,\qquad (\alpha,\beta\in\R_+, 0\le x\le y\le 1)
\end{align*}
for concise notation. ($I_{x,y}(\alpha,\beta)$ is the probability
that a $\betadist(\alpha,\beta)$ distributed random variable falls into $(x,y)\subset[0,1]$,
and $I_{0,x}(\alpha,\beta)$ is its cumulative distribution function.)

\subsection{Which Case of the CMT?}

We are now ready to apply the CMT (\wref{thm:CMT}).
As shown in \wref{sec:toll-function-leading-term}, our toll function is $\Theta(n \log n)$,
so we have $\alpha=1$ and $\beta=1$.
We hence compute 
\begin{align*}
		H
	&\wwrel=
		1-\int_0^1 z\,w(z) \:dz 
\\	&\wwrel= 
		1 - \int_0^1 2\,\bigl[\tfrac13 < z < \tfrac12 \bin\vee z > \tfrac23\bigr] 
					\frac{z^{t+1}(1-z)^{t}}{\BetaFun(t+1,t+1)} \: dz
\\	&\wwrel= 
		1 - 2\frac{t+1}{k+1}
				\int_0^1 \bigl[\tfrac13 < z < \tfrac12 \bin\vee z > \tfrac23\bigr] 
				\frac{z^{t+1}(1-z)^{t}}{\BetaFun(t+2,t+1)} \: dz
\\	&\wwrel= 
		1 - \Bigl( I_{\frac13,\frac12}(t+2,t+1)
			+ I_{\frac23,1}(t+2,t+1)\Bigr)
\\	&\wwrel= 
		  I_{0,\frac13}(t+2,t+1)
		+ I_{\frac12,\frac23}(t+2,t+1)
\numberthis\label{eq:H}
\end{align*}
For any sampling parameters, we have $H>0$,
so the overall costs satisfy by Case~1 of \wref{thm:CMT}
\begin{align*}\SwapAboveDisplaySkip
		c(n)
	&\wwrel\sim
		\frac{t(n)}{H}
	\wwrel\sim
		\frac{\bar a n \lg n}{H}
		,\qquad (n\to\infty).
\numberthis\label{eq:cn-leading-term}
\end{align*}

\subsection{Cancellations}

Combining \wref[Equations]{eq:tn-leading-term} and~\wref{eq:cn-leading-term}, we find
\ifconf{%
$	
		c(n)
	\sim
		a n \lg n
		,\text{ as }(n\to\infty),
$%
}{%
\begin{align*}
		c(n)
	&\wwrel\sim
		a n \lg n
		,\qquad(n\to\infty);
\end{align*}
}
since $I_{0,\frac13} + I_{\frac13,\frac12} + I_{\frac12,\frac23} + I_{\frac23,1} = 1$.
The leading term of the number of comparisons in QuickXsort is the \emph{same} as in X itself,
regardless of how the pivot elements are chosen!
This is not as surprising as it might first seem.
We are typically sorting a constant fraction of the input by X
and thus only do a logarithmic number of recursive calls on a 
geometrically decreasing number of elements,
so the linear contribution of Quicksort (partitioning and recursion cost) is
dominated by even the first call of X, which has linearithmic cost.
This remains true even if we allow asymmetric sampling, \eg,
by choosing the pivot as the \emph{smallest} (or any other order statistic)
of a random sample.

Edelkamp and Weiß~\cite{EdelkampWeiss2014} give the above result for the case of using
the median of $\sqrt n$ elements, where we effectively have exact medians from
the perspective of analysis.
In this case, the informal reasoning given above is precise, and in fact, 
in this case the same form of cancellations also happen for the linear term
\cite[Thm.\ 1]{EdelkampWeiss2014}. (See also the ``exact ranks'' result in \wref{sec:discussion}.)
We will show in the following that for practical schemes of pivot sampling, \ie, with fixed sample sizes,
these cancellations happen \emph{only for the leadings-term approximation.}
The pivot sampling scheme does affect the linear term significantly;
and to measure the benefit of sampling,
the analysis thus has to continue to the next term of the asymptotic expansion of~$c(n)$.

\paragraph{Relative Subproblem Sizes}
The integral $\int_0^1 z w(z)\,dz$ is precisely the expected relative subproblem size for the recursive
call, whereas for $t(n)$ we are interested in the subproblem that is sorted using X
whose relative size is given by
$\int_0^1 (1-z)w(z)\, dz = 1-\int_0^1 zw(z)\, dz$.
We can thus write $\bar a = a H$.

\begin{figure}[htpb]
	\plaincenter{\includegraphics[width=.5\linewidth]{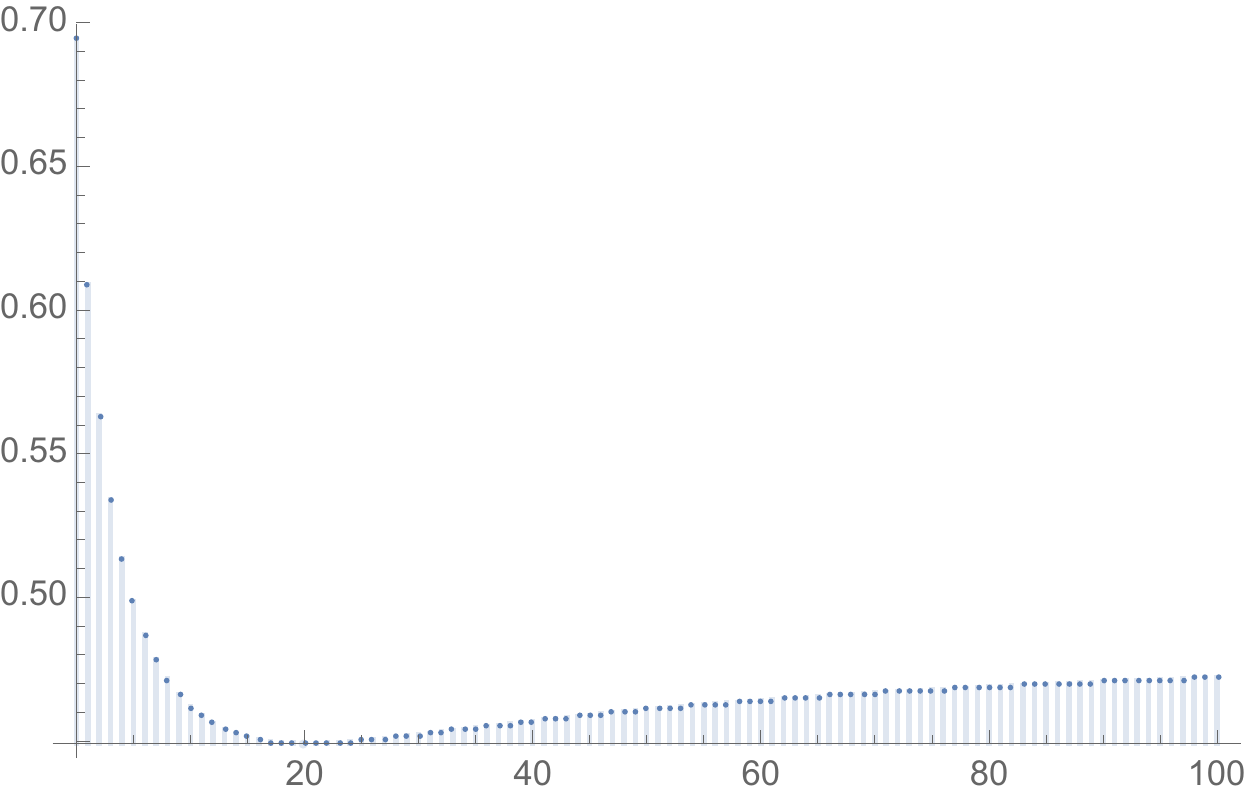}}
	\caption{%
		$\int_0^1 z w(z)\,dz$, the relative recursive subproblem size, as a function of $t$.
	}
	\label{fig:recursive-subproblem-size}
\end{figure}

The quantity $\int_0^1 z w(z)\,dz$, the average relative size of the recursive call
is of independent interest.
While it is intuitively clear that for $t\to\infty$, \ie, the case of exact medians as pivots,
we must have a relative subproblem size of exactly \smash{$\frac12$},
this convergence is not
apparent from the behavior for finite $t$: the mass of the integral $\int_0^1 z w(z)\,dz$
concentrates at $z=\frac12$, a point of discontinuity in $w(z)$.
It is also worthy of note that the expected subproblem size is initially larger than $\frac12$ 
($0.69\overline 4$ for $t=0$), then decreases to $\approx 0.449124$ around $t=20$ and then starts 
to slowly increase again%
\ifconf{.}{ (see \wref{fig:recursive-subproblem-size}).}

\section{Solving the Recurrence: The Linear Term}
\label{sec:solution-linear-term}

Since $c(n) \sim a n \lg n$ for any choice of $t$,
the leading term alone 
does not allow to make distinctions to judge the effect of sampling schemes.
To compute the next term in the asymptotic expansion of $c(n)$,
we consider the values $c'(n) = c(n) - a n\lg n$.
$c'(n)$ has essentially the same recursive structure as $c(n)$, only with a different toll function:
\begin{align*}
		c'(n)
	&\wwrel=
		c(n) - a n\lg n
\\	&\wwrel=
		\sum_{r=1}^2 \E[\big]{A_r(J_r) c(J_r)} - a n \lg n + t(n)
\\	&\wwrel=
		\sum_{r=1}^2 \biggl( 
			\E[\Big]{A_r(J_r) \bigl(c(J_r) - a J_r \lg J_r \bigr)} 
			+ a \,\E[\big]{A_r(J_r) J_r \lg J_r} 
		\biggr)
		\bin- a n \lg n 
\\*	&\wwrel\ppe\quad{}
		\wbin+ 		(n-1) 
					\bin+ \E[\big]{A_2(J_2) \cdot x(J_1)}
					\bin+ \E[\big]{A_1(J_1) \cdot x(J_2)}
\\	&\wwrel=
		\sum_{r=1}^2 \E[\Big]{A_r(J_r) c'(J_r)}
		+(n-1) - a n\lg n
\\*	&\wwrel\ppe\quad{}
		\bin+ a\,\E[\Big]{\bigl( A_1(J_1) +  A_2(J_2) \bigr)J_1 \lg J_1} + b\,\E{A_2(J_2) J_1}
\\*	&\wwrel\ppe\quad{}
		\bin+ a\,\E[\Big]{\bigl( A_2(J_2) +  A_1(J_1) \bigr)J_2 \lg J_2} + b\,\E{A_1(J_1) J_2}
		\wwbin\pm \Oh(n^{1-\epsilon})
\end{align*}
Since $J_1\eqdist J_2$ we can simplify
\begin{align*}
	&\wwrel\ppe	\E[\Big]{\bigl( A_1(J_1) +  A_2(J_2) \bigr)J_1 \lg J_1}
		\bin+ \E[\Big]{\bigl( A_2(J_2) +  A_1(J_1) \bigr)J_2 \lg J_2}
\\	&\wwrel=
		\E[\Big]{\bigl( A_1(J_1) +  A_2(J_2) \bigr)J_1 \lg J_1}
		\bin+ \E[\Big]{\bigl( A_2(J_1) +  A_1(J_2) \bigr)J_1 \lg J_1}
\\	&\wwrel=
		\E[\Big]{J_1 \lg J_1\cdot \Bigl(\bigl( A_1(J_1) + A_1(J_2) \bigr)+\bigl( A_2(J_1) + A_2(J_2) \bigr)\Bigr)}
\\	&\wwrel=
		2\E{J \lg J}
\\	&\wwrel{\eqwithref{eq:ex-JlnJ}}
		2 \,\E{\tfrac Jn} \cdot n\lg n + 2\cdot \tfrac 1{\ln 2} \E{\tfrac Jn \ln \tfrac Jn} \cdot n
\\	&\wwrel{\eqwithref[c]{lem:E-Jn-ln-Jn}}
		n\lg n - \tfrac1{\ln 2}\bigl(\harm{k+1}-\harm{t+1}\bigr) n \wwbin\pm\Oh(n^{1-\epsilon}).
\end{align*}

Plugging this back into our equation for $c'(n)$, we find
\begin{align*}
		c'(n)
	&\wwrel=
		\sum_{r=1}^2 \E[\Big]{A_r(J_r) c'(J_r)}
		+(n-1) - a n\lg n
\\*	&\wwrel\ppe\quad{}
		\bin+ a \,\Bigl( n\lg n - \tfrac1{\ln 2}\bigl(\harm{k+1}-\harm{t+1}\bigr) n \Bigr) 
\\*	&\wwrel\ppe\quad{}
		\bin+ b\,\Bigl(I_{0,\frac13}(t+2,t+1)+I_{\frac12,\frac23}(t+2,t+1)\Bigr) \cdot n
		\wwbin\pm \Oh(n^{1-\epsilon})
\\	&\wwrel=
		\sum_{r=1}^2 \E[\Big]{A_r(J_r) c'(J_r)} \bin+ t'(n)
\shortintertext{where}
		t'(n)
	&\wwrel=
		b' n \wbin\pm \Oh(n^{1-\epsilon})
\ifconf{\quad\text{with}\quad}{\\}
	b'
	\ifconf{}{&}\wwrel=
			1 - \tfrac a{\ln 2}\bigl(\harm{k+1}-\harm{t+1}\bigr)
			+b \cdot H
\end{align*}
Apart from the smaller toll function $t'(n)$, this recurrence has the very same shape
as the original recurrence for $c(n)$;
in particular, we obtain the same shape function $w(z)$ and the same $H > 0$
and obtain
\begin{align*}
		c'(n)
	&\wwrel\sim
		\frac{t'(n)}{H}
	\wwrel\sim
		\frac{b'n}{H}.
\end{align*}

\subsection{Error Bound}

Since our toll function is not given precisely, but only up to an error term
$\Oh(n^{1-\epsilon})$ for a given fixed $\epsilon\in(0,1]$,
we also have to estimate the overall influence of this term.
For that we consider the recurrence for $c(n)$ again, but replace $t(n)$ (entirely) by
$C \cdot n^{1-\epsilon}$.
If $\epsilon > 0$, $\int_0^1 z^{1-\epsilon} w(z)\, dz < \int_0^1 w(z)\, dz = 1$,
so we still find $H>0$ and apply case~1 of the CMT\@. The overall contribution of the 
error term is then $\Oh(n^{1-\epsilon})$.
For $\epsilon=0$, $H=0$ and case~2 applies, giving an overall error term of $\Oh(\log n)$.

\medskip\noindent
This completes the proof of \wref{thm:cn}.

\section{Discussion}
\label{sec:discussion}

Since all our choices for X are leading-term optimal, so will QuickXsort be. 
We can thus fix $a=1$ in \wref{thm:cn}; only $b$ (and the allowable $\alpha$) still depend on X.
We then basically find that going from X to QuickXsort adds a ``penalty'' $q$
in the linear term that depends only on the sampling size (and $\alpha$), but not on X.
\wref{tab:q} shows that this penalty is $\approx n$ without sampling, but can be reduced
drastically when choosing pivots from a sample of $3$ or $5$ elements.
(Note that the overall costs for pivot sampling are $\Oh(\log n)$ for constant $t$.)

\begin{table}[bht]
	\begin{center}
	\begin{tabular}{ r *{6}{c} }
	\toprule
		                 & $t=0$    &  $t=1$   &  $t=2$   &  $t=3$   &  $t=10$   & $t\to\infty$ \\
	\midrule
		$\alpha=1$       & $1.1146$ & $0.5070$ & $0.3210$ & $0.2328$ & $0.07705$ &     $0$      \\
		$\alpha=\frac12$ & $0.9120$ & $0.4050$ & $0.2526$ & $0.1815$ & $0.05956$ &     $0$      \\
	\bottomrule
	\end{tabular}
	\end{center}
	\caption{%
		QuickXsort penalty.
		QuickXsort with $x(n) = n\lg n + bn$ yields
		$c(n) = n\lg n + (q +b) n$ where $q$, the QuickXsort penalty, is given in the table.
	}
	\label{tab:q}
\end{table}

As we increase the sample size, we converge to the situation studied by Edelkamp and Weiß
using median-of-$\sqrt n$,
where no linear-term penalty is left~\cite{EdelkampWeiss2014}.
Given that $q$ is less than $0.08$ already for a sample of $21$ elements, these large sample
versions are mostly of theoretical interest.
It is noteworthy that the improvement from no sampling to median-of-3 yields a 
reduction of $q$ by more than $50\%$, which is much more than
its effect on Quicksort itself 
(where it reduces the leading term of costs by 15\,\% from $2n\ln n$ to $\frac{12}7 n\ln n$).

We now apply our transfer theorem to the two most well-studied choices for X, Heapsort and Mergesort,
and compare the results to analyses and measured comparison counts from previous work.
The results confirm that solving the QuickXsort recurrence exactly yields much more accurate
predictions for the overall number of comparisons than previous bounds that circumvented
this.

\subsection{QuickHeapsort}

The basic external Heapsort of Cantone and Cincotti~\cite{CantoneCincotti2002}
always traverses one path in the heap from root to bottom and does one comparison for each edge followed,
\ie, $\lfloor \lg n \rfloor$ or $\lfloor \lg n \rfloor-1$ many per deleteMax.
By counting how many leaves we have on each level, Diekert and Weiß found~\cite[Eq.\ 1]{DiekertWeiss2016}
\[
		n \bigl(\lfloor \lg n\rfloor -1\bigr) + 2 \bigl(n-2^{\lfloor \lg n\rfloor}\bigr) \bin\pm\Oh(\log n) 
	\wwrel\le 
		n\lg n -0.913929n \bin\pm\Oh(\log n)
\]
comparisons for the sort-down phase.
(The constant of the linear term is $1-\frac1{\ln2}-\lg(2\ln 2)$, the supremum of the periodic function at the linear term). 
Using the classical heap construction method adds on average $1.8813726n$ comparisons~\cite{Doberkat1984}, so here
\begin{align*}
		x(n)
	&\wwrel=
		n\lg n + 0.967444 n \wwbin\pm \Oh(n^\epsilon) 
\ifconf{\qquad\text{for any $\epsilon>0$.}}{}
\end{align*}
\ifconf{}{for any $\epsilon>0$.}%


Both~\cite{CantoneCincotti2002} and~\cite{DiekertWeiss2016} report averaged comparison counts
from running time experiments. We compare them in \wref{tab:cmps-qhs}
against the estimates from our result and previous analyses.
While the approximation is not very accurate for $n=100$ (for all analyses),
for larger $n$, our estimate is correct up to the first three digits, whereas 
previous upper bounds have almost one order of magnitude bigger errors.
Note that it is expected for our bound to still be on the conservative side
since we used the supremum of the periodic linear term for Heapsort.

\begin{table}[htpb]
	\begin{center}
	\smaller
	\def\na{\multicolumn1c{---}}
	\begin{tabular}{ r *{4}{ >{$} r <{$} } }
	\toprule
		\multicolumn1c{\textbf{Instance}}                            & \multicolumn1c{\textbf{observed}} & \multicolumn1c{\textbf{W}} & \multicolumn1c{\textbf{CC}} & \multicolumn1c{\textbf{DW}} \\
	\midrule
		Fig.\,4~\cite{CantoneCincotti2002}, $n=10^2$, $k=1$ & 806                      & +67               & +158               & +156               \\
		Fig.\,4~\cite{CantoneCincotti2002}, $n=10^2$, $k=3$ & 714                      & +98               & \na                & +168               \\
		Fig.\,4~\cite{CantoneCincotti2002}, $n=10^5$, $k=1$ & 1\,869\,769              & -600              & +90\,795           & +88\,795           \\
		Fig.\,4~\cite{CantoneCincotti2002}, $n=10^5$, $k=3$ & 1\,799\,240              & +9\,165           & \na                & +79\,324           \\
		Fig.\,4~\cite{CantoneCincotti2002}, $n=10^6$, $k=1$ & 21\,891\,874             & +121\,748         & +1\,035\,695       & +1\,015\,695       \\
		Fig.\,4~\cite{CantoneCincotti2002}, $n=10^6$, $k=3$ & 21\,355\,988             & +49\,994          & \na                & +751\,581          \\[1ex]
		Tab.\,2~\cite{DiekertWeiss2016}, $n=10^4$, $k=1$    & 152\,573                 & +1\,125           & +10\,264           & +10\,064           \\
		Tab.\,2~\cite{DiekertWeiss2016}, $n=10^4$, $k=3$    & 146\,485                 & +1\,136           & \na                & +8\,152            \\
		Tab.\,2~\cite{DiekertWeiss2016}, $n=10^6$, $k=1$    & 21\,975\,912             & +37\,710          & +951\,657          & +931\,657          \\
		Tab.\,2~\cite{DiekertWeiss2016}, $n=10^6$, $k=3$    & 21\,327\,478             & +78\,504          & \na                & +780\,091          \\
	\bottomrule
	\end{tabular}%
	\end{center}
	\caption{%
		Comparison of estimates from this paper (W),
		Theorem~6 of~\cite{CantoneCincotti2002} (CC) and Theorem~1 of~\cite{DiekertWeiss2016} (DW);
		shown is the difference between the estimate and the observed average.
	}
	\label{tab:cmps-qhs}
	\vspace{-2ex}
\end{table}

\subsection{QuickMergesort}

For QuickMergesort, Edelkamp and Weiß~\cite[Fig.\ 4]{EdelkampWeiss2014} report measured average 
comparison counts for a median-of-3 version using top-down Mergesort: 
the linear term is shown to be between $-0.8n$ and $-0.9n$.
In a recent manuscript~\cite{EdelkampWeiss2018}, 
they also analytically consider the simplified median-of-3 QuickMergesort which
\emph{always} sorts the \emph{smaller} segment by Mergesort (\ie, $\alpha=1$).
It uses $n\lg n-0.7330n+o(n)$ comparisons on average (using $b=-1.24$).
They use this as a (conservative) upper bound for the original QuickMergesort.

Our transfer theorem shows that this bound is off by roughly $0.1n$:
median-of-3 QuickMergesort uses at most $c(n) = n\lg n - 0.8350 n \pm \Oh(\log n)$ 
comparisons on average.
Going to median-of-5 reduces the linear term to $-0.9874n$, 
which is better than the worst-case for top-down Mergesort for most $n$.


\oldparagraph{Skewed Pivots for Mergesort?}

For Mergesort with $\alpha=\frac12$ the largest fraction of elements we can
sort by Mergesort in one step is $\frac23$;
this suggests that using a slightly skewed pivot might be beneficial since it 
will increase the subproblem size for Mergesort and decrease the size for recursive calls.
Indeed, Edelkamp and Weiß allude to this variation:
	\textit{``With about
	15\% the time gap, however, is not overly big, and may be bridged with additional 
	efforts like skewed pivots and refined partitioning.''}
(the statement appears in the arXiv version of~\cite{EdelkampWeiss2014}, 
\href{https://arxiv.org/abs/1307.3033}{\texttt{arxiv.org/abs/1307.3033}}).
And the above mentioned StackExchange post actually chooses pivots as the second tertile.

Our analysis above can be extended to skewed sampling schemes
(omitted due to space constraints),
but to illustrate this point it suffices to
pay a short visit to ``wishful-thinking land'' and assume that we can get exact quantiles for free.
We can show (\eg, with Roura's discrete master theorem~\cite{Roura2001})
that if we always pick the exact $\rho$-quantile of the input, for $\rho\in(0,1)$,
the overall costs are
\[
	c_\rho(n) \wwrel=
	\begin{dcases*}
	n\lg n + \biggl(\frac{1 + h(\rho)}{1-\rho} + b\biggr)n \wbin\pm\Oh(n^{1-\epsilon})
		& if $\rho \in (\frac13,\frac12)\cup(\frac23,1)$ \\
	n\lg n + \biggl(\frac{1 + h(\rho)}{\rho} + b\biggr)n \wbin\pm\Oh(n^{1-\epsilon})
		& otherwise
	\end{dcases*}
\]
for $h(x) = x \lg x + (1-x)\lg(1-x)$.
The coefficient of the linear term has a \emph{strict minimum} at $\rho=\frac12$:
Even for $\alpha=\frac12$, the best choice is to use the median of a sample.
(The result is the same for fixed-size samples.)
For QuickMergesort, skewed pivots turn out to be a pessimization,
despite the fact that we sort a larger part by Mergesort.
A possible explanation is that skewed pivots significantly decrease 
the amount of information we obtain from the comparisons during partitioning,
but do not make partitioning any cheaper.

\subsection{Future Work}

More promising than skewed pivot sampling is the use of \emph{several} pivots.
The resulting MultiwayQuickXsort would be able to sort all but one segment using X
and recurse on only one subproblem.
Here, determining the expected subproblem sizes becomes a challenge,
in particular for $\alpha<1$; we leave this for future work.

We also confined ourselves to the expected number of comparisons here,
but more details about the distribution of costs are possible to obtain.
The variance follows a similar recurrence as the one studied in this paper
and a distributional recurrence for the costs can be given.
The discontinuities in the subproblem sizes add a new facet to these analyses.

Finally, it is a typical phenomenon that constant-factor optimal sorting methods
exhibit periodic linear terms. QuickXsort inherits these fluctuations
but smooths them through the random subproblem sizes.
Explicitly accounting for these effects is another interesting challenge for future work.

\myacknowledgements

\clearpage
\appendix

\section{Notation}
\label{app:notation}

\subsection{Generic Mathematics}
\begin{notations}
\notation{$\N$, $\N_0$, $\Z$, $\R$}
	natural numbers $\N = \{1,2,3,\ldots\}$, 
	$\N_0 = \N \cup \{0\}$,
	integers $\Z = \{\ldots,-2,-1,0,1,2,\ldots\}$,
	real numbers $\R$.
\notation{$\R_{>1}$, $\N_{\ge3}$ etc.}
	restricted sets $X_\mathrm{pred} = \{x\in X : x \text{ fulfills } \mathrm{pred} \}$.
\notation{$0.\overline 3$}
	repeating decimal; $0.\overline3 = 0.333\ldots = \frac13$; \\
	numerals under the line form the repeated part of 
	the decimal number.
\notation{$\ln(n)$, $\lg(n)$}
	natural and binary logarithm; $\ln(n) = \log_e(n)$, $\lg(n) = \log_2(n)$.
\notation{$X$}
	to emphasize that $X$ is a random variable it is Capitalized.
\notation{$[a,b)$}
	real intervals, the end points with round parentheses are excluded, 
	those with square brackets are included.
\notation{$[m..n]$, $[n]$}
	integer intervals, $[m..n] = \{m,m+1,\ldots,n\}$;
	$[n] = [1..n]$.
\notation{$[\text{stmt}]$, $[x=y]$}
	Iverson bracket, $[\text{stmt}] = 1$ if stmt is true, $[\text{stmt}] = 0$ otherwise.
\notation{$\harm n$}
	$n$th harmonic number; $\harm n = \sum_{i=1}^n 1/i$.
\notation{$x \pm y$}
	$x$ with absolute error $|y|$; formally the interval $x \pm y = [x-|y|,x+|y|]$;
	as with $\Oh$-terms, we use one-way equalities $z=x\pm y$ instead of $z \in x \pm y$.
\notation{$\BetaFun(\alpha,\beta)$}
	the beta function, $\BetaFun(\alpha,\beta) = \int_0^1 z^{\alpha-1}(1-z)^{\beta-1}\, dz$
\notation{$I_{x,y}(\alpha,\beta)$}
	the regularized incomplete beta function;
	$I_{x,y}(\alpha,\beta)=
			\int_x^y \frac{z^{\alpha-1}(1-z)^{\beta-1}}{\BetaFun(\alpha,\beta)} \, dz$
			for $\alpha,\beta\in\R_+$, $0\le x\le y\le 1$.
\notation{$a^{\underline b}$, $a^{\overline b}$}
	factorial powers;
	``$a$ to the $b$ falling resp.\ rising.''
\end{notations}

\subsection{Stochastics-related Notation}
\begin{notations}
\notation{$\Prob{E}$, $\Prob{X=x}$}
	probability of an event $E$ resp.\ probability for random variable $X$ to
	attain value $x$.
\notation{{$\E{X}$}}
	expected value of $X$; we write $\E{X\given Y}$ for the conditional expectation
	of $X$ given $Y$, and $\Eover X{f(X)}$ to emphasize that expectation is taken 
	\wrt random variable $X$.
\notation{$X\eqdist Y$}
	equality in distribution; $X$ and $Y$ have the same distribution.
\notation{$\uniform(a,b)$}
	uniformly in $(a,b)\subset\R$ distributed random variable. 
\notation{$\betadist(\alpha,\beta)$}
	Beta distributed random variable with shape parameters $\alpha\in\R_{>0}$ and $\beta\in\R_{>0}$.
\notation{$\binomial(n,p)$}
	binomial distributed random variable with $n\in\N_0$ trials and success probability $p\in[0,1]$.\\
\notation{$\betaBinomial(n,\alpha,\beta)$}
	beta-binomial distributed random variable;
	$n\in\N_0$, $\alpha,\beta\in\R_{>0}$;
\end{notations}

\subsection{Notation for the Algorithm}
\begin{notations}
\notation{$n$}
	length of the input array, \ie, the input size.
\notation{$k$, $t$}
	sample size $k\in \N_{\ge1}$, odd; $k=2t+1$, $t\in\N_0$.
\notation{$x(n)$, $a$, $b$}
	Average costs of X,
	$x(n) = a n \lg n + b n \pm O(n^{1-\epsilon})$.
\notation{$t(n)$, $\bar a$, $\bar b$}
	toll function
	$t(n) = \bar a n \lg n + \bar b n \pm O(n^{1-\epsilon})$
\notation{$J_1$, $J_2$}
	(random) subproblem sizes; $J_1+J_2 = n-1$;
	$J_1 = t + I_1$;
\notation{$I_1$, $I_2$}
	(random) segment sizes in partitioning;
	$I_1\eqdist \betaBinomial(n-k, t+1,t+1)$;
	$I_2 = n-k-I_1$;
	$J_1 = t + I_1$
\end{notations}

\section{The Continuous Master Theorem}
\label{app:CMT}

We restate Roura's CMT here for convenience.

\begin{theorem}[Roura's Continuous Master Theorem (CMT)]
\label{thm:CMT}
	Let \(F_n\) be recursively defined~by
	\begin{align}
	\label{eq:CMT-recurrence}
		F_n \wwrel= \begin{dcases*}
			b_n\:,	
				& for \( 0 \le n < N \); \\
			\vphantom{\bigg|}
				t_n \bin+ \smash{\sum_{j=0}^{n-1} w_{n,j} \, F_j}, 
				& for  \(n \ge N\)\,, 
		\end{dcases*}
	\end{align}
	where \(t_n\), the toll function, satisfies \(t_n \sim K n^\alpha \log^\beta(n)\) as
	\(n\to\infty\) for constants \(K\ne0\), \(\alpha\ge0\) and \(\beta > -1\).
	Assume there exists a function \(w:[0,1]\to \R_{\ge0}\), the \textit{shape function,}
	with \(\int_0^1 w(z) dz \ge 1 \) and
	\begin{align}
	\label{eq:CMT-shape-function-condition}
		\sum_{j=0}^{n-1} \,\biggl|
			w_{n,j} \bin- \! \int_{j/n}^{(j+1)/n} \mkern-15mu w(z) \: dz
		\biggr|
		\wwrel= \Oh(n^{-d}),
		\qquad(n\to\infty),
	\end{align}
	for a constant \(d>0\).
	With \(\displaystyle H \ce 1 - \int_0^1 \!z^\alpha w(z) \, dz\), we
	have the following cases:
	\begin{enumerate}[itemsep=0ex]
		\item If \(H > 0\), then \(\displaystyle F_n \sim \frac{t_n}{H}\).
			\(\vphantom{\displaystyle\int_0^1}\)
		\item \label{case:CMT-H0} 
		If \(H = 0\), then 
		$\displaystyle
		F_n \sim \frac{t_n \ln n}{\tilde H}$ with 
		\(\displaystyle \tilde H = -(\beta+1)\int_0^1 \!z^\alpha \ln(z) \, w(z) \, dz\).
		\item \label{case:CMT-theta-nc}
		If \(H < 0\), then \(F_n = \Oh(n^c)\) for the unique
		\(c\in\R\) with \(\displaystyle\int_0^1 \!z^c w(z) \, dz = 1\).
	\end{enumerate}
\qed\end{theorem}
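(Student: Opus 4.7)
The plan is to prove \wref{thm:cn} by two successive applications of Roura's continuous master theorem (\wref{thm:CMT}) to the recurrence \weqref{eq:recurrence-cn-E} for $c(n)$, generalizing the analysis sketched in \wref[Sections]{sec:solution-leading-term} and~\ref{sec:solution-linear-term} for $\alpha=\tfrac12$ to arbitrary $\alpha\in[0,1]$. The first step is to identify the shape function. Writing $J_1 = t + I_1$ with $I_1\eqdist\betaBinomial(n-k,t+1,t+1)$, the local limit law (\wref{lem:beta-binomial-convergence-to-beta}) gives $\Prob{J_1=j} = \tfrac1n f_P(j/n) \pm \Oh(n^{-2})$ uniformly in $j$, where $f_P$ is the $\betadist(t+1,t+1)$ density. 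Combining this with the case analysis encoded by $A_1,A_2$ (recurse on the smaller segment when both segments are buffer-feasible; recurse on the larger when only the smaller one fits as a buffer), and collapsing the two symmetric branches using $J_1\eqdist J_2$, yields the shape function
\[
  w(z) \wwrel= 2\,\bigl[z\in(\tfrac{\alpha}{1+\alpha},\tfrac12)\cup(\tfrac{1}{1+\alpha},1)\bigr]\cdot\frac{z^{t}(1-z)^{t}}{\BetaFun(t+1,t+1)},
\]
which generalizes \weqref{eq:shape-function}. Condition \weqref{eq:CMT-shape-function-condition} then follows from the uniform convergence in the local limit law and the Lipschitz continuity of $f_P$ on $[0,1]$, with the jump discontinuities of $w$ at $\alpha/(1+\alpha),\tfrac12,1/(1+\alpha)$ contributing only $\Oh(n^{-1})$ each.

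The second step extracts the leading term. Using $J\lg J = (J/n)\,n\lg n + (J/n)\lg(J/n)\,n$ and exploiting $J_1\eqdist J_2$ together with $A_1(J_1)+A_2(J_2)=1$, the dominant part of $t(n) = (n-1) + \E{A_2(J_2)x(J_1)} + \E{A_1(J_1)x(J_2)}$ reduces to $\bar a\,n\lg n$ with $\bar a = a\int_0^1(1-z)w(z)\,dz = aH$; the value of $H$ emerges after rewriting the restricted beta integrals in terms of $I_{\cdot,\cdot}(t+2,t+1)$ and using $\int_0^1 w(z)\,dz = 1$. Since $H>0$ for every admissible $\alpha$ and $t$, Case~1 of \wref{thm:CMT} yields $c(n)\sim \bar a\,n\lg n/H = a\,n\lg n$, so the leading term cancels regardless of the pivot strategy.

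The third step targets the linear coefficient. I set $c'(n) := c(n) - a\,n\lg n$; the recurrence for $c'$ shares the weights $w_{n,j}$ but carries a new toll function $t'(n)$, obtained as follows. Subtracting $a\,n\lg n$ and again invoking $A_1+A_2=1$ with $J_1\eqdist J_2$, the leftover from the $(J/n)\lg(J/n)\,n$ pieces of $J_1\lg J_1$ and $J_2\lg J_2$ equals $(2a/\ln 2)\,\E{(J/n)\ln(J/n)}\cdot n$; evaluating this expectation against the beta-binomial distribution (equivalently, differentiating $\BetaFun(\alpha{+}1,\beta)/\BetaFun(\alpha,\beta)$ in $\alpha$ to produce digamma differences) simplifies to the harmonic combination $\harm{k+1}-\harm{t+1}$. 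Adding the $bn$-contribution of $x(n)$, which integrates to $bHn$ against the shape function, gives
\[
  t'(n) \wwrel= b'n \wbin\pm \Oh(n^{1-\epsilon}), \qquad b' \wwrel= 1 - \frac{a(\harm{k+1}-\harm{t+1})}{\ln 2} + bH,
\]
and a second application of Case~1 of \wref{thm:CMT} delivers $c'(n)\sim b'n/H$, which is precisely the linear coefficient in the theorem statement. The error bound $\Oh(n^{1-\epsilon}+\log n)$ is obtained by feeding a pure $C\,n^{1-\epsilon}$ toll back into \wref{thm:CMT}: when $\epsilon>0$, $\int_0^1 z^{1-\epsilon}w(z)\,dz<1$ keeps us in Case~1 with output $\Oh(n^{1-\epsilon})$, and for $\epsilon=0$ Case~2 kicks in and produces the extra $\log n$ factor.

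The main obstacle is the second CMT pass: because the leading order cancels, one must track the subdominant $\E{(J/n)\ln(J/n)}$ term with precision $\Oh(n^{-\epsilon})$, and recognizing the resulting beta integral of $z^{t+1}(1-z)^t\ln z$ as the clean combination $\harm{k+1}-\harm{t+1}$ is the one nontrivial closed-form identity in the proof. A secondary subtlety is handling the jump discontinuities of $w(z)$ at the buffer-feasibility boundaries $\alpha/(1+\alpha),\tfrac12,1/(1+\alpha)$: the uniform (not merely pointwise) convergence in \wref{lem:beta-binomial-convergence-to-beta} is what guarantees the $\Oh(n^{-d})$ bound in \weqref{eq:CMT-shape-function-condition} near these points.
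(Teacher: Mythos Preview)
Your proposal does not address the stated theorem. The statement in question is \wref{thm:CMT}, Roura's Continuous Master Theorem; the paper does \emph{not} prove this result but merely restates it for the reader's convenience, with an explicit citation to Roura's thesis~\cite{Roura1997} and to~\cite{MartinezRoura2001} (note the \verb|\qed| immediately after the theorem environment and the accompanying remark). What you have written is instead a proof sketch of \wref{thm:cn}, the transfer theorem for QuickXsort, which \emph{uses} \wref{thm:CMT} as a black box.

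As a proof of \wref{thm:cn}, your outline is essentially the paper's own argument from \wref[Sections]{sec:solution-leading-term} and~\ref{sec:solution-linear-term}, generalized from $\alpha=\tfrac12$ to arbitrary $\alpha$: the same shape function, the same two passes through the CMT (first for the leading term, then for $c'(n)=c(n)-a\,n\lg n$), the same identification of $\E{(J/n)\ln(J/n)}$ via the logarithmic beta integral, and the same error-term handling. So there is no new idea there. But none of this constitutes a proof of the CMT itself, which would require an entirely different argument about general divide-and-conquer recurrences and is outside the scope of this paper.
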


\noindent
\wref{thm:CMT} is the ``reduced form'' of the CMT,
which appears as Theorem~1.3.2 in Roura's doctoral thesis~\cite{Roura1997},
and as Theorem~18 of \cite{MartinezRoura2001}.
The full version (Theorem~3.3 in~\cite{Roura2001})
allows us to handle sublogarithmic factors in the toll function, as well, 
which we do not need here.

\section{Local Limit Law for the Beta-Binomial Distribution}
\label{app:local-limit-beta}

Since the binomial distribution is sharply concentrated, one can
use Chernoff bounds on beta-binomial variables after conditioning on the beta distributed
success probability.
That already implies that $\betaBinomial(n,\alpha,\beta)/n$ converges to $\betadist(\alpha,\beta)$
(in a specific sense).
We can obtain stronger error bounds, though, by directly comparing the PDFs.
Doing that gives the following result;
a detailed proof is given in \cite{Wild2016}, Lemma~2.38.

\begin{lemma}[Local Limit Law for Beta-Binomial, \cite{Wild2016}, Lemma~2.38]
\label{lem:beta-binomial-convergence-to-beta}
~\\
	Let \((\ui In)_{n\in\N_{\ge1}}\) 
	be a family of random variables with beta-binomial distribution,
	\(\ui In \eqdist \betaBinomial(n, \alpha,\beta)\) where \(\alpha,\beta\in\{1\}\cup\R_{\ge2}\), 
	and let \(f_B(z)\) be the density of the \(\betadist(\alpha,\beta)\) distribution.
	Then we have uniformly in \(z\in(0,1)\) that 
	\begin{align*}
				n \cdot \Prob[\big]{ I = \lfloor z(n+1)\rfloor } 
			\wwrel= 
				f_B(z) \bin\pm \Oh(n^{-1})
				,\qquad (n\to\infty).
	\end{align*}
	That is, \(\ui In/n\) converges to \(\betadist(\alpha,\beta)\) in distribution, 
	and the probability weights converge uniformly to the limiting density at rate \(\Oh(n^{-1})\).
\end{lemma}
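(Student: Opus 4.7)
The plan is to derive the local limit law by direct asymptotic comparison of the beta-binomial probability mass function with the beta density, using Stirling's formula at the level of Gamma-function ratios rather than going through the binomial-Chernoff route (which is too lossy for a rate of $\Oh(n^{-1})$). I begin by rewriting the PMF as a product of Gamma quotients:
\[
\Prob{I=i} \wwrel= \binom{n}{i}\frac{\BetaFun(\alpha+i,\beta+n-i)}{\BetaFun(\alpha,\beta)}
\wwrel= \frac{1}{\BetaFun(\alpha,\beta)} \cdot \frac{\Gamma(n+1)\,\Gamma(\alpha+i)\,\Gamma(\beta+n-i)}{\Gamma(i+1)\,\Gamma(n-i+1)\,\Gamma(\alpha+\beta+n)},
\]
which isolates the three ``perturbed'' ratios
$R_1 = \Gamma(\alpha+i)/\Gamma(i+1)$, $R_2 = \Gamma(\beta+n-i)/\Gamma(n-i+1)$, and $R_3 = \Gamma(n+1)/\Gamma(\alpha+\beta+n)$.

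The key analytic ingredient is the Stirling-based expansion
\[
\frac{\Gamma(x+a)}{\Gamma(x+b)} \wwrel= x^{a-b}\bigl(1 + \Oh(x^{-1})\bigr) \qquad(x\to\infty),
\]
valid uniformly for $a,b$ in a compact set. Applying it with $x=i$ to $R_1$, with $x=n-i$ to $R_2$, and with $x=n$ to $R_3$, and multiplying the three estimates, yields
\[
n\cdot\Prob{I=i} \wwrel= \frac{(i/n)^{\alpha-1}\,(1-i/n)^{\beta-1}}{\BetaFun(\alpha,\beta)}\;\bigl(1 + \Oh(i^{-1}) + \Oh((n-i)^{-1}) + \Oh(n^{-1})\bigr).
\]
Setting $i=\lfloor z(n+1)\rfloor$, I replace $i/n$ by $z$ at a cost of $\Oh(n^{-1})$ using the Lipschitz continuity of $f_B$ on compact subintervals of $(0,1)$, which already proves the claim for $z$ bounded away from $0$ and $1$.

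The delicate step, and the main obstacle, is extracting \emph{uniform} error control near the endpoints $z=0$ and $z=1$, where the $\Oh(i^{-1})$ and $\Oh((n-i)^{-1})$ pieces are much larger than $\Oh(n^{-1})$. Here the restriction $\alpha,\beta\in\{1\}\cup\R_{\ge2}$ does the work and must be used in both of the following regimes:
\begin{itemize}
\item If $\alpha=1$ then $R_1 \equiv 1$ \emph{exactly}, so there is no $\Oh(i^{-1})$ error at all; symmetrically for $\beta=1$.
\item If $\alpha\ge 2$ then $R_1$ is a rising factorial $(i+1)(i+2)\cdots(i+\alpha-1)$ (for integer $\alpha$) or an explicit Gamma ratio whose refined Stirling expansion furnishes absolute errors bounded by $C\cdot i^{\alpha-2}$; these errors contribute at most $\Oh(n^{-1})\cdot(i/n)^{\alpha-2}\cdot(i/n) \le \Oh(n^{-1})$ once multiplied by the already-vanishing prefactor $(i/n)^{\alpha-1}$. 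Symmetrically for $\beta\ge 2$ near $z=1$.
\end{itemize}
The forbidden range $\alpha\in(1,2)$ is precisely what would spoil this: there the density $f_B(z)$ has an integrable but nontrivial singularity in its derivative near $0$, and the Gamma ratio $R_1$ is neither exact nor a polynomial, so neither escape route applies.

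Combining the interior estimate with the two boundary estimates gives $n\cdot\Prob{I=\lfloor z(n+1)\rfloor} = f_B(z) \pm \Oh(n^{-1})$ uniformly in $z\in(0,1)$, which is the claimed local limit law. The convergence in distribution is then an immediate corollary (by Scheffé's lemma applied to the discrete-to-continuous weights $n\cdot\Prob{I=\lfloor z(n+1)\rfloor}$).
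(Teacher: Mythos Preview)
The paper does not actually prove this lemma; it only states it and refers to \cite{Wild2016}, Lemma~2.38, for the detailed argument, remarking that one ``can obtain stronger error bounds \ldots\ by directly comparing the PDFs.'' Your proposal follows precisely this route---writing the beta-binomial PMF as a product of Gamma ratios and expanding each ratio via the Stirling-type estimate $\Gamma(x+a)/\Gamma(x+b)=x^{a-b}(1+\Oh(x^{-1}))$---so there is no methodological divergence to discuss.

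Your sketch is essentially correct and identifies the crucial point: the restriction $\alpha,\beta\in\{1\}\cup\R_{\ge2}$ is exactly what makes the error uniform near the boundary, either because the offending Gamma ratio is \emph{exact} ($\alpha=1$) or because the extra factor $(i/n)^{\alpha-2}$ stays bounded ($\alpha\ge2$). Two small points of bookkeeping are worth tightening. First, the displayed error contribution ``$\Oh(n^{-1})\cdot(i/n)^{\alpha-2}\cdot(i/n)$'' carries a spurious extra factor $(i/n)$; the correct additive error from the $R_1$ expansion is $\Oh(n^{-1})\cdot(i/n)^{\alpha-2}\cdot(1-i/n)^{\beta-1}$, which is indeed $\Oh(n^{-1})$ once $\alpha\ge2$ and $\beta\ge1$. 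Second, the Stirling estimate for $R_1$ is asymptotic in $i$, so for $i$ bounded (equivalently $z=\Oh(1/n)$) you cannot invoke it; you should note that those finitely many $i$ are handled by a direct evaluation, since for $\alpha\ge2$ both $n\cdot\Prob{I=i}$ and $f_B(z)$ are $\Oh(n^{1-\alpha})=\Oh(n^{-1})$ there, while for $\alpha=1$ the ratio $R_1$ is exact and no boundary issue arises. With these two clarifications the argument is complete.
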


\section{Smoothness of the Shape Function}
\label{app:shape-function-condition}

In this appendix we show
that $w(z)$ as given in \wpeqref{eq:shape-function}
fulfills \wpeqref{eq:CMT-shape-function-condition}, the approximation-rate criterion of the CMT.
We consider the following ranges for 
$\frac{\lfloor zn\rfloor}{n-1} = \frac j{n-1}$ separately:
\begin{itemize}
\item $\frac{\lfloor zn\rfloor}{n-1} < \frac13$ and $\frac12<\frac{\lfloor zn\rfloor}{n-1}<\frac23$.\\
	Here $w_{n,\lfloor zn\rfloor} = 0$ and so is $w(z)$.
	So actual value and approximation are exactly the same.
\item $\frac13<\frac{\lfloor zn\rfloor}{n-1} < \frac12$ and $\frac{\lfloor zn\rfloor}{n-1}>\frac23$.\\
	Here $w_{n,j} = 2 \Prob{J=j}$ and $w(z) = 2 f_P(z)$ where $f_P(z) = z^t(1-z)^t / \BetaFun(t+1,t+1)$
	is twice the density of the beta distribution $\betadist(t+1,t+1)$.
	Since $f_P$ is Lipschitz-continuous on the bounded interval $[0,1]$ (it is a polynomial)
	the uniform pointwise convergence from above is enough to bound the sum of
	$\bigl| w_{n,j} \bin- \! \int_{j/n}^{(j+1)/n} w(z) \: dz \bigr|$ over all $j$ in the range by
	$\Oh(n^{-1})$.
\item $\frac{\lfloor zn\rfloor}{n-1} \in \{\frac13,\frac12,\frac23\}$.\\
	At these boundary points, the difference between $w_{n,\lfloor zn\rfloor}$ and $w(z)$ does not
	vanish (in particularly $\frac12$ is a singular point for $w_{n,\lfloor zn\rfloor}$), 
	but the absolute difference is bounded.
	Since this case only concerns $3$ out of $n$ summands, the overall contribution to the error
	is $\Oh(n^{-1})$.
\end{itemize}
Together, we find that \weqref{eq:CMT-shape-function-condition} is fulfilled as claimed:
\begin{align}
	\sum_{j=0}^{n-1} \,\biggl|
		w_{n,j} \bin- \! \int_{j/n}^{(j+1)/n} \mkern-15mu w(z) \: dz
	\biggr|
	\wwrel= \Oh(n^{-1})
	\qquad(n\to\infty).
\end{align}

\section{Approximation by (Incomplete) Beta Integrals}
\label{app:approx-by-beta-integrals}

\begin{lemma}
\label{lem:E-Jn-int}
	Let $J \eqdist \betaBinomial(n-c_1,\alpha,\beta) + c_2$ be a random variable
	that differs by fixed constants $c_1$ and $c_2$ from a beta-binomial variable 
	with parameters $n\in \N$ and $\alpha, \beta\in\N_{\ge1}$.
	Then the following holds
	\begin{thmenumerate}{lem:E-Jn-int}
	\item
	\label{lem:E-xy-Jn}
		For fixed constants $0\le x\le y\le 1$ holds
		\begin{align*}
				\E[\big]{[x n\le J\le y n] \cdot J \lg J}
			&\wwrel=
				\frac{\alpha}{\alpha+\beta} \, I_{x,y}(\alpha+1,\beta) \cdot n\lg n \wbin\pm\Oh(n)
				,\qquad(n\to\infty).
		\end{align*}
		The result holds also when any or both of the inequalities in $[x n\le J\le y n]$ are strict.
	\item 
	\label{lem:E-Jn-ln-Jn}
		$\E{\frac Jn \ln \frac Jn} = 
		\frac{\alpha}{\alpha+\beta} (\harm{\alpha}-\harm{\alpha+\beta}) \pm \Oh(n^{-h})$
		for any $h\in(0,1)$.
	\end{thmenumerate}
\end{lemma}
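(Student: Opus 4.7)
The plan is to use \wref{lem:beta-binomial-convergence-to-beta} to reduce the discrete sums over values of $J$ to Riemann sums approximating integrals against the beta density $f_B(z) = z^{\alpha-1}(1-z)^{\beta-1}/\BetaFun(\alpha,\beta)$, and then to evaluate those integrals in closed form. The constant shift $J = \tilde I + c_2$ with $\tilde I \eqdist \betaBinomial(n-c_1,\alpha,\beta)$ affects $J/n$ by only $\Oh(1/n)$ and the summation range by $\Oh(1)$ indices, both of which are negligible relative to the stated error terms, so I would first reduce to the unshifted case $J \eqdist \betaBinomial(n,\alpha,\beta)$.

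For part (a), I would apply the factorization from \weqref{eq:ex-JlnJ}, namely $j\lg j = \tfrac{j}{n}\,n\lg n + \tfrac{j}{n}\lg(\tfrac{j}{n})\cdot n$. Since $z \mapsto z\lg z$ is bounded on $[0,1]$, the second summand contributes only $\Oh(n)$ after taking expectation. The first summand contributes $n\lg n \cdot \E{[xn\le J\le yn]\,J/n}$, and by \wref{lem:beta-binomial-convergence-to-beta} one has
\begin{align*}
    \E[\big]{[xn\le J\le yn]\tfrac{J}{n}}
    \wwrel=
    \sum_{j=\lceil xn\rceil}^{\lfloor yn\rfloor}\!\bigl(\tfrac{1}{n}f_B(j/n)\pm\Oh(n^{-2})\bigr)\tfrac{j}{n}
    \wwrel=
    \int_x^y z\,f_B(z)\,dz \,\pm\, \Oh(n^{-1}),
\end{align*}
where the Riemann-sum approximation is valid because $z\,f_B(z)$ is Lipschitz on $[0,1]$ (a polynomial, as $\alpha,\beta$ are integers). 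Multiplying by $n\lg n$, the error becomes $\Oh(\log n) = \Oh(n)$, absorbing the boundary-point ambiguity (strict vs.\ non-strict inequalities change the sum by at most two terms, each of order $\Prob{J=j} = \Oh(n^{-1})$). Finally I would compute the integral by noting that $z\,f_B(z) = \tfrac{\BetaFun(\alpha+1,\beta)}{\BetaFun(\alpha,\beta)}\cdot\tfrac{z^\alpha(1-z)^{\beta-1}}{\BetaFun(\alpha+1,\beta)}$, where the prefactor equals $\tfrac{\alpha}{\alpha+\beta}$ and the remaining factor is the $\betadist(\alpha+1,\beta)$ density, yielding $\tfrac{\alpha}{\alpha+\beta}\,I_{x,y}(\alpha+1,\beta)$.

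For part (b), let $g(z) = z\ln z$ (with $g(0)=0$). Applying \wref{lem:beta-binomial-convergence-to-beta} termwise as above,
\begin{align*}
    \E[\big]{g(J/n)}
    \wwrel=
    \tfrac{1}{n}\sum_{j=0}^{n}f_B(j/n)\,g(j/n) \,\pm\, \Oh(n^{-1}),
\end{align*}
and it remains to compare this Riemann sum with $\int_0^1 g(z)\,f_B(z)\,dz$. Evaluating the integral is straightforward by differentiating $\BetaFun(\alpha+1,\beta) = \int_0^1 z^\alpha(1-z)^{\beta-1}\,dz$ with respect to $\alpha$, which gives
\begin{align*}
    \int_0^1 z\ln z\cdot f_B(z)\,dz
    \wwrel=
    \tfrac{\BetaFun(\alpha+1,\beta)}{\BetaFun(\alpha,\beta)}\bigl(\psi(\alpha+1)-\psi(\alpha+\beta+1)\bigr)
    \wwrel=
    \tfrac{\alpha}{\alpha+\beta}\bigl(\harm{\alpha}-\harm{\alpha+\beta}\bigr),
\end{align*}
using $\psi(m+1) = \harm{m}-\gamma$ for integer $m\ge 1$.

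The main obstacle is the Riemann-sum error in part (b): since $g'(z) = \ln z + 1$ is unbounded as $z\downarrow 0$, the standard total-variation bound giving $\Oh(n^{-1})$ does not apply. However, $g$ is Hölder continuous on $[0,1]$ of any exponent $h\in(0,1)$, and multiplying by the smooth $f_B$ preserves this. The resulting Riemann-sum error of $\Oh(n^{-h})$ is exactly what the lemma claims and dominates the $\Oh(n^{-1})$ coming from the local limit law; this is the reason the stated bound falls short of $\Oh(n^{-1})$. An alternative, should one wish for a cleaner split, would be to separate the range $j<n^{1-h}$ (where both $\Prob{J=j}$ and $|g(j/n)|$ are small enough that the contribution is $\Oh(n^{-h})$) from $j\ge n^{1-h}$ (where $g$ is Lipschitz and standard Riemann-sum bounds apply).
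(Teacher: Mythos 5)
Your proposal is correct and follows essentially the same route as the paper's proof: reduce to the unshifted beta-binomial, convert the expectation to a Riemann sum via the local limit law, use Lipschitz continuity of the polynomial integrand in part (a) and H\"older continuity (for any exponent $h<1$, to handle the unbounded derivative of $z\ln z$ at $0$) in part (b), and evaluate the resulting beta integrals in closed form. The only cosmetic difference is that you derive the logarithmic beta integral by differentiating $\BetaFun(\alpha+1,\beta)$ with respect to $\alpha$, whereas the paper cites the known closed form directly.
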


\begin{proof}
We start with part (a).
By the local limit law for beta binomials (\wref{lem:beta-binomial-convergence-to-beta}) 
it is plausible to expect a reasonably small error when
we replace $\E[\big]{[x n\le J\le y n] \cdot J \lg J}$ by
$\E[\big]{ [x \le P\le y ] \cdot (P n) \lg (P n) }$
where $P\eqdist \betadist(\alpha,\beta)$ is beta distributed.
We bound the error in the following.

We have 
$
	\E[\big]{[x n\le J\le y n] \cdot J \ln J}
	\wrel=
	\E[\big]{[x n\le J\le y n] \cdot \frac Jn} \cdot n \ln n \pm \Oh(n)
$
by \weqref{eq:ex-JlnJ};
it thus suffices to compute $\E[\big]{[x n\le J\le y n] \cdot \tfrac Jn}$.
We first replace $J$ by $I\eqdist\betaBinomial(n,\alpha,\beta)$ and argue later that this 
results in a sufficiently small error.
We expand
\begin{align*}
		\E[\big]{[x\le \tfrac In\le y] \cdot \tfrac In}
	&\wwrel=
		\sum_{i = \lceil xn\rceil} ^ {\lfloor yn \rfloor}
			\tfrac in \cdot \Prob{I = i}
\\	&\wwrel=
		\frac 1n \sum_{i = \lceil xn\rceil} ^ {\lfloor yn \rfloor}
			\tfrac in \cdot n \Prob{I = i}
\\	&\wwrel{\relwithref[r]{lem:beta-binomial-convergence-to-beta}=}
		\frac 1n \sum_{i = \lceil xn\rceil} ^ {\lfloor yn \rfloor}
			\tfrac in \cdot 
			\biggl(\frac{(i/n)^{\alpha-1}(1-(i/n))^{\beta-1}}{\BetaFun(\alpha,\beta)} \bin\pm \Oh(n^{-1})\biggr)
\\	&\wwrel=
		\frac1 {\BetaFun(\alpha,\beta)} \cdot \frac 1n \sum_{i = \lceil xn\rceil} ^ {\lfloor yn \rfloor}
			f(i/n) \wwbin\pm \Oh(n^{-1}),
\end{align*}
where 
$f(z) = z^{\alpha} (1-z)^{\beta-1}$.

Note that $f(z)$ is Lipschitz-continuous on the bounded interval $[x,y]$ 
since it is continuously differentiable (it is a polynomial).
Integrals of Lipschitz functions are well-approximated by finite Riemann sums;
see Lemma 2.12 (b) of \cite{Wild2016} for a formal statement.
We use that on the sum above
\begin{align*}
		\frac 1n \sum_{i = \lceil xn\rceil} ^ {\lfloor yn \rfloor}
					f(i/n)
	&\wwrel=
		\int_x^y f(z) \,dz  \wbin\pm \Oh(n^{-1})
		,\qquad(n\to\infty).
\end{align*}
Inserting above and using $\BetaFun(\alpha+1,\beta) / \BetaFun(\alpha,\beta) = \alpha/(\alpha+\beta)$
yields
\begin{align*}
		\E[\big]{[x \le \tfrac In \le y] \cdot \tfrac In}
	&\wwrel=
		\frac{\int_x^y z^{\alpha}(1-z)^{\beta-1}\, dz}{\BetaFun(\alpha, \beta)}
		\wwbin\pm \Oh(n^{-1})
\\	&\wwrel=
		\frac{\alpha}{\alpha+\beta} \, I_{x,y}(\alpha+1,\beta) 
		\wwbin\pm \Oh(n^{-1});
\numberthis\label{eq:E-I-xy}
\shortintertext{recall that}
		I_{x,y}(\alpha,\beta)
	&\wwrel=
		\int_x^y \frac{z^{\alpha-1}(1-z)^{\beta-1}}{\BetaFun(\alpha,\beta)} \, dz
	\wwrel=
		\Prob[\big]{ x<P<y }
\end{align*}
denotes the regularized incomplete beta function.

Changing from $I$ back to $J$ has no influence on the given approximation.
To compensate for the difference in the number of trials ($n-c_1$ instead of $n$),
we use the above formulas for with $n-c_1$ instead of $n$; since we let $n$ go to
infinity anyway, this does not change the result.
Moreover, replacing $I$ by $I+c_2$ changes the value of the argument $z=I/n$
of $f$ by $\Oh(n^{-1})$; 
since $f$ is smooth, namely Lipschitz-continuous,
this also changes $f(z)$ by at most $\Oh(n^{-1})$.
The result is thus not affected by more than the given error term:
\[
		\E[\big]{[x\le \tfrac Jn\le y] \cdot \tfrac Jn}
	=
		\E[\big]{[x\le \tfrac In\le y] \cdot \tfrac In}
		\pm \Oh(n^{-1})
\]
We obtain the claim by multiplying with $n\lg n$.

Versions with strict inequalities in $[x n\le J\le y n]$
only affect the bounds of the sums above by one, which again gives a negligible error of 
$\Oh(n^{-1})$.

\medskip\noindent
This concludes the proof of part~(a).

\needspace{5\baselineskip}
\bigskip\noindent
For part (b), we follow a similar route. 
The function we integrate is no longer Lipschitz continuous,
but a weaker form of smoothness is sufficient to bound the difference between
the integral and its Riemann sums.
Indeed, the above cited Lemma 2.12 (b) of \cite{Wild2016} is formulated 
for the weaker notion of Hölder-continuity:
	A function $f:I\to R$ defined on a bounded interval $I$ is called Hölder-continuous with exponent $h\in(0,1]$
	when 
	\[
		\exists C\;
		\forall x,y\in I\wrel:
			\bigl| f(x) - f(y) \bigr|
			\wrel\le 
			C |x-y|^h.
	\]
	This generalizes Lipschitz-continuity (which corresponds to $h=1$).
	
As above, we replace $J$ by $I\eqdist\betaBinomial(n,\alpha,\beta)$,
which affects the overall result by $\Oh(n^{-1})$.
We compute
\begin{align*}
		\E[\big]{\tfrac In \ln \tfrac In}
	&\wwrel=
		\sum_{i = 0} ^ {n}
			\tfrac in \ln \tfrac in \cdot \Prob{I = i}
\\	&\wwrel{\relwithref[r]{lem:beta-binomial-convergence-to-beta}=}
		\frac 1n \sum_{i = 0} ^ {n}
			\tfrac in \ln \tfrac in \cdot 
			\biggl(\frac{(i/n)^{\alpha-1}(1-(i/n))^{\beta-1}}{\BetaFun(\alpha,\beta)} \bin\pm \Oh(n^{-1})\biggr)
\\	&\wwrel=
		- \frac1 {\BetaFun(\alpha,\beta)} \cdot \frac 1n \sum_{i = 0} ^ {n}
			f(i/n) \wwbin\pm \Oh(n^{-1}),
\end{align*}
where now
$f(z) = \ln (1/z) \cdot z^{\alpha} (1-z)^{\beta-1}$.
Since the derivative is $\infty$ for $z=0$, $f$ cannot be Lipschitz-continuous,
but it is Hölder-continuous
on $[0,1]$ for any exponent $h \in (0,1)$:
$z\mapsto \ln(1/z)z$ is Hölder-continuous 
(see, \eg, \cite{Wild2016}, Prop.~2.13.), 
products of Hölder-continuous function remain such on bounded intervals
and the remaining factor of $f$ is a polynomial in $z$, which is Lipschitz- and
hence Hölder-continuous.

By Lemma 2.12 (b) of \cite{Wild2016} we then have
\begin{align*}
		\frac 1n \sum_{i = 0} ^ {n}
					f(i/n)
	&\wwrel=
		\int_0^1 f(z) \,dz  \wbin\pm \Oh(n^{-h})
\end{align*}
Recall that we can choose $h$ as close to $1$ as we wish; this will
only affect the constant hidden by the $\Oh(n^{-h})$.
It remains to actually compute the integral;
fortunately, this ``logarithmic beta integral'' has a well-known closed form
(see, \eg,~\cite{Wild2016}, Eq.~(2.30)).
\begin{align*}
		\int_0^1 \ln (z) \cdot z^{\alpha} (1-z)^{\beta-1}
	&\wwrel=
		\BetaFun(\alpha+1,\beta) \bigl( \harm{\alpha} - \harm{\alpha+\beta} \bigr)
\end{align*}
Inserting above, we finally find
\begin{align*}
		\E{\tfrac Jn \ln \tfrac Jn}
	&\wwrel=
		\E{\tfrac In \ln \tfrac In} \wbin\pm\Oh(n^{-1})
\\	&\wwrel=
		\frac{\alpha}{\alpha+\beta} \bigl( \harm{\alpha} - \harm{\alpha+\beta} \bigr)
			\wbin\pm\Oh(n^{-h})
\end{align*}
for any $h \in (0,1)$.
\end{proof}

\needspace{8\baselineskip}

\bibliographystyle{plainurl}
\bibliography{quick-mergesort}

\end{document}